\newtheorem{theorem}{Theorem}
\newtheorem{proposition}{Proposition}
\newtheorem{assumption}{Assumption}
\newtheorem{example}{Example}
\newtheorem*{theorem*}{Theorem}
\theoremstyle{remark}
\newtheorem{remark}{Remark}
\renewcommand*{\@seccntformat}[1]{%
  \csname the#1\endcsname.
}
\renewcommand\section{\@startsection {section}{1}{\z@}
                                   {-3.5ex \@plus -1ex \@minus -.2ex}%
                                   {2.3ex \@plus.2ex}%
                                   {\centering\normalsize\scshape}}
\renewcommand{\subsection}[1]{%
  \pagebreak[2]
  \refstepcounter{subsection} \addcontentsline{toc}{subsection}{
    {\protect\makebox[0.3in][r]{\thesubsection.} #1}}
  \noindent
  \textbf{\thesubsection.\ {#1}.}
  \everypar={}%
}
\renewcommand{\subsubsection}[1]{%
  \pagebreak[2]
  \refstepcounter{subsubsection} \addcontentsline{toc}{subsubsection}{
    {\protect\makebox[0.3in][r]{\thesubsubsection.} #1}}
  \noindent
  \textbf{\thesubsubsection.\ {#1}.}
  \everypar={}%
}
\newcommand{\xmath}[1]{\ensuremath{#1}\xspace}
\newcommand{\E}{\xmath{\mathbb{E}}}
\renewcommand{\Pr}{\xmath{\mathbb{P}}}
\newcommand{\diff}{{\mathnormal{d}}}
\newcommand{\bmath}[1]{\mbox{\boldmath{$#1$}}}
\begin{document}
\title{\textbf{Incorporating Views on Marginal Distributions in the
    Calibration of Risk Models}} 

\author{Santanu Dey \and Sandeep Juneja\thanks{Corresponding author.
    Email addresses: $\{$dsantanu, juneja, kamurthy$\}$@tifr.res.in}
  \and Karthyek R. A. Murthy}

\date{\textit{Tata Institute of Fundamental Research,
    Mumbai}\\[\baselineskip]} 

\maketitle

\begin{abstract}
  Entropy based ideas find wide-ranging applications in finance for
  calibrating models of portfolio risk as well as options pricing.
  The abstracted problem, extensively studied in the literature,
  corresponds to finding a probability measure that minimizes relative
  entropy with respect to a specified measure while satisfying
  constraints on moments of associated random variables.  These
  moments may correspond to views held by experts in the portfolio
  risk setting and to market prices of liquid options for options
  pricing models. However, it is reasonable that in the former
  settings, the experts may have views on tails of risks of some
  securities. Similarly, in options pricing, significant literature
  focuses on arriving at the implied risk neutral density of benchmark
  instruments through observed market prices. With the intent of
  calibrating models to these more general stipulations, we develop a
  unified entropy based methodology to allow constraints on both
  moments as well as marginal distributions of functions of underlying
  securities.  This is applied to Markowitz portfolio framework, where
  a view that a particular portfolio incurs heavy tailed losses is
  shown to lead to fatter and more reasonable tails for losses of
  component securities. We also use this methodology to price
  non-traded options using market information such as observed option
  prices and implied risk neutral densities of benchmark instruments.
\end{abstract}

\section{Introduction}
Entropy based ideas have found a number of popular applications in
finance over the last two decades. A key application involves
portfolio optimization where we often have a prior probability model
and some independent expert views on the assets involved. If such
views are of the form of constraints on moments, entropy based methods
are used (see, e.g., Meucci \cite{fully:flexible}) to arrive at a
`posterior' probability measure that is closest in the sense of
minimizing relative entropy or $I$-divergence to the prior probability
model while satisfying those moment constraints. Another important
application involves calibrating the risk neutral probability measure
used for pricing options (see, e.g., Buchen and Kelly
\cite{buch:kelly}, Stutzer \cite{stutzer}, Avellaneda et
al. \cite{avel3}).  Here, entropy based ideas are used to arrive at a
probability measure that correctly prices given liquid options (which
are expectations of option payoffs) while again being closest to a
specified prior probability measure.

As indicated, in the existing literature the conditions imposed on the
posterior measure correspond to constraints on the moments of the
underlying random variables. However, the constraints that arise in
practice may be more general.  For instance, in portfolio optimization
settings, an expert may have a view that a certain index of stocks has
a fat-tailed t-distribution, and is looking for a posterior joint
distribution as a model of stock returns that satisfies this
requirement while being closest to a prior model, that may, for
instance, be based on historical data. 

Similarly, a view on the risk neutral density of a certain financial
instrument would also be reasonable if it is heavily traded, e.g.,
futures contract on a market index, and such views on marginal
densities can be used to better price less liquid instruments that are
correlated with the heavily traded instrument. There is now a sizable
literature that focuses on estimating the implied risk neutral density
from the observed option prices of an asset that has a highly liquid
options market (see \cite{jackwerth2004option} for a comprehensive
review).  In \cite{figlewski2009estimating}, Figlewski notes that the
implied risk neutral density of the US market portfolio, as a whole
entity, implicitly captures market's expectations, investors' risk
preferences and sensitivity to information releases and events. This
is usually not possible with just a finite number of constraints on
expected values of payoffs from options.  So in the options pricing
scenario, views on the posterior measure could include, for example,
those on the the implied risk neutral density of a security price
estimated from certain heavily traded options written on that
security. See, for example, Avellaneda \cite{avel1} for a discussion
on the need to use all the available econometric information and
stylized market facts to accurately calibrate mathematical models.

Motivated by these considerations, in this paper we devise a
methodology to arrive at a posterior probability measure when the
constraints on this measure are of a general nature that, apart from
moment constraints, include specifications of marginal distributions
of functions of underlying random variables as well.

\vspace{0.1in}

\noindent {\em Related literature:} 
The evolving literature on updating models for portfolio optimization
to include specified views builds upon the pioneering work of Black
and Litterman \cite{black:litterman}. They consider variants of
Markowitz's model where the subjective views of portfolio managers are
used as constraints to update models of the market using ideas from
Bayesian analysis. Their work focuses on Gaussian framework with views
restricted to linear combinations of expectations of returns from
different securities.  Since then a number of variations and
improvements have been suggested (see, e.g., \cite{mina:xiao},
\cite{least:disc} and \cite{qian:gorman}).  
Earlier, Avellaneda et al. \cite{avel3} used weighted Monte Carlo
methodology to calibrate asset pricing models to market data (also see
Glasserman and Yu \cite{glasserman:yu}).  Buchen and Kelly in
\cite{buch:kelly} and Stutzer in \cite{stutzer} use the entropy
approach to calibrate one-period asset pricing models by selecting a
pricing measure that correctly prices a set of benchmark instruments
while minimizing $I$-divergence from a prior specified model, that
may, for instance be estimated from historical data (see the recent
survey article \cite{kitamura:stutzer}).

\vspace{0.1in}

\noindent {\em Our contributions:} 
As mentioned earlier, we focus on examples related to portfolio
optimization and options pricing. It is well known that for views
expressed as a finite number of moment constraints, the optimal
solution to the $I$-divergence minimization can be characterized as a
probability measure obtained by suitably exponentially twisting the
original measure; this exponentially twisted measure is known in
literature as the Gibbs measure (see, for instance,
\cite{dembo:zeit}).  We generalize this to allow cases where the
expert views may specify marginal probability distribution of
functions of random variables involved. We show that such views, in
addition to views on moments of functions of underlying random
variables can be easily incorporated. In particular, under technical
conditions, we characterize the optimal solution with these general
constraints, when the objective is $I$-divergence and show the
uniqueness of the resulting optimal probability measure.

As an illustration, we apply our results to portfolio modeling in
Markowitz framework where the returns from a finite number of assets
have a multivariate Gaussian distribution and expert view is that a
certain portfolio of returns is fat-tailed. We show that in the
resulting probability measure, under mild conditions, all correlated
assets are similarly fat-tailed.  Hence, this becomes a reasonable way
to incorporate realistic tail behavior in a portfolio of
assets. Generally speaking, the proposed approach may be useful in
better risk management by building conservative tail views in
mathematical models. We also apply our results to price an option
which is less liquid and written on a security that is correlated with
another heavily traded asset whose risk neutral density is inferred
from the options market prices. We conduct numerical experiments on
practical examples that validate the proposed methodology.

\textit{Organization of the paper:} We formulate the model selection
problem as an optimization problem in Section \ref{SEC-MSP}, and
derive the posterior probability model as its solution in Section
\ref{SEC-OPT-SOl}. In Section \ref{SEC-PORTFOLIO-APP}, we apply our
results to the portfolio problem in the Markowitz framework and
develop explicit expressions for the posterior probability measure.
There we also show how a view that a portfolio of assets has a
`regularly varying' fat-tailed distribution renders a similar
fat-tailed marginal distribution to all assets correlated to this
portfolio. Further, we numerically test our proposed algorithms on
practical examples.  In Section \ref{SEC-OPT-APP}, we illustrate the
applicability of the proposed framework in options pricing
scenario. Finally, we end in Section \ref{SEC-CONC} with a brief
conclusion.  All but the simplest proofs are relegated to the
Appendix.

\section{The Model Selection Problem}
\label{SEC-MSP}
In this section, we briefly review the notion of relative entropy
between probability measures and use it to formally state our model
selection
problem.\\


\subsection{Relative Entropy and its Variational Representation}
Let $(\Omega,\mathcal{F})$ denote a measurable space and $\mathcal{P}$
denote the set of all probability measures on $(\Omega,\mathcal{F}).$
If a measure $\nu$ on $(\Omega,\mathcal{F})$ is absolutely continuous
with respect to $\mu,$ we denote this by $\nu \ll \mu.$ For any
$\nu,\mu \in\mathcal{P},$ the \textit{relative entropy} of $\nu$ with
respect to $\mu$ (also known as \textit{$I$-divergence} or
\textit{Kullback-Leibler divergence}) is defined as
\[D(\nu||\mu) :=
\begin{cases}
  \int \log\left(\frac{\diff \nu}{\diff \mu}\right)\diff \nu,
  &\text{ if } \nu \ll \mu, \\
  \infty, &\text{ otherwise.}
\end{cases}
\]
For any bounded measurable function $\psi$ mapping $\Omega$ into
$\mathbb{R},$ it is well known that,
\begin{equation}
  \log \int_\Omega e^\psi d \mu  =\sup_{\nu} \left\{ \int_\Omega
    \psi\,d\nu-D(\nu\mid\mid\mu) \right\}.  
  \label{VAR-REP-REL-ENT}
\end{equation}
Furthermore, this supremum is attained at $\nu^*$ given by:
\begin{equation} 
  \frac{d\nu^*}{d\mu}=\frac{e^\psi}{\int e^\psi\,d\mu}.
  \label{eqn:opt_measure}
\end{equation}
See, for instance, \cite{dup:eli} for a proof of this and for
other concepts related to relative entropy.\\

\subsection{Problem Formulation}
Let the random vectors ${\bf X} = (X_1,\ldots,X_m)$ and ${\bf Y} =
(Y_1,\ldots,Y_n)$ denote the risk factors associated with the prior
reference risk model, which is specified as a joint probability
density $f({\bf x,y})$ over ${\bf X}$ and ${\bf Y}.$ This model,
typically arrived using statistical analysis of historical data, is
used for risk analysis (such as calculating expected shortfall, VaR,
etc.), or for choosing optimal positions in portfolios. However, the
market presents itself with additional information, usually in the
form of `views' of experts (or) current market observables. These
views can be simple moment constraints as in,
\begin{equation*}
  \int_{\bf x,y} h_i({\bf x,y}) \Pr(\diff {\bf x},\diff {\bf y})  =
  c_i, i=1,\ldots,k, 
  \label{MOM-CONST}
\end{equation*}
(or) as detailed as constraints over marginal densities:
\begin{equation*}
  \int_{\bf y} \Pr(\diff {\bf x},\diff {\bf y}) = g({\bf x})  
  \text{ for all {\bf x}}, 
  \label{MARG-CONST}
\end{equation*}
where $c_i,i=1,\ldots,k$ are constants, $g(\cdot)$ is a given marginal
density of ${\bf X}$ and $\Pr(\cdot)$ is the unknown probability
measure governing the risk factors. Then our objective is to identify
a probability model that has minimum relative entropy with respect to
the prior model $f(\cdot,\cdot)$ while agreeing with the views on
moments of ${\bf Y}$ and marginal distribution of ${\bf X}.$ Though
relative entropy $D(\cdot||\cdot)$ is not a metric, it has been widely
used to discriminate between probability measures in the context of
model calibration (see \cite{thomas:cover}, \cite{buch:kelly},
\cite{stutzer},\cite{avel2}, \cite{avel1}, \cite{avel3},
\cite{fully:flexible} and \cite{kitamura:stutzer}). Let
$\mathcal{P}(f)$ denote the collection of probability density
functions which are absolutely continuous with respect to the density
$f(\cdot,\cdot)$ (a density $\tilde{f}(\cdot,\cdot)$ is said to be
absolutely continuous with respect to $f(\cdot, \cdot)$ if for almost
every $x$ and $y$ such that $f(x,y) = 0,$ $\tilde{f}(x,y)$ also equals
$0$). Formally, the resulting optimization problem ${\bf O_1}$ is:
\[\min_{\tilde{f} \in \mathcal{P}(f)} \int \log
\left(\frac{\tilde{f}(\bold x,\bold y)}{f(\bold x,\bold y)}\right)
\tilde{f}(\bold x,\bold y) d \bold x d \bold y,\] subject to: 
\begin{subequations}
\begin{align}
  \int_{\bold y} \tilde{f}(\bold x,\bold y) d \bold y = g(\bold
  x)\,\,\,\text{for all}\,\,\,\bold x, \text{ and
  } \label{eqn:constr_3}\\
  \int_{\bold x,\bold y} h_i(\bold x, \bold y) \tilde{f}(\bold x,\bold
  y) d \bold x d \bold y = c_i, i=1,2\ldots,k. \label{eqn:constr_4}
\end{align}
\end{subequations}

\section{Solution to the Optimization Problem ${\bf O_1}$} 
\label{SEC-OPT-SOl}
Some notation is needed to proceed further.  For any $\bmath{\lambda}=
(\lambda_1,\lambda_2,\ldots,\lambda_k) \in \mathbb{R}^k,$
let \[f_{\bmath{\lambda}}(\bold y|\bold x):= \frac{\exp
  (\sum_{i=1}^k\lambda_i h_i(\bold x,\bold y)){f}(\bold y|\bold
  x)}{\int_{\bold y} \exp (\sum_{i=1}^k\lambda_i h_i(\bold x,\bold y))
  {f}(\bold y|\bold x)\diff \bold y}=\frac{\exp (\sum_{i=1}^k\lambda_i
  h_i(\bold x,\bold y)){f}(\bold x,\bold y)}{\int_{\bold y} \exp
  (\sum_{i=1}^k\lambda_i h_i(\bold x,\bold y)) {f}(\bold x,\bold y)
  \diff \bold y}\] whenever the denominator exists. Further, let
$f_{\bmath{\lambda}}(\bold x,\bold y):=f_{\bmath{\lambda}}(\bold
y|\bold x)\times g(\bold x)$ denote a joint density function of
$(\bold X, \bold Y)$ and $\E_{\bmath{\lambda}}[\cdot]$ denote the
expectation under $f_{\bmath{\lambda}}(\cdot,\cdot)$. Let $m_g(\cdot)$
be the measure corresponding to the probability density $g(\cdot)$ on
$\mathbb{R}^m.$ For a mathematical claim that depends on ${\bf x} \in
\mathbb{R}^m,$ say $S({\bf x}),$ we write $S({\bf x})$ for almost all
${\bf x},$ with respect to $g({\bf x})\diff {\bf x}$ to mean that
$m_g(\{{\bf x}: S({\bf x}) \text{is false}\})=0.$
\begin{theorem}
  If there exists a $\bmath{\lambda} =
  (\lambda_1,\ldots,\lambda_k)\in\mathbb{R}^k$ such that
  \begin{enumerate}[(a)]
  \item $\int_{\bold y} \exp (\sum_i\lambda_i h_i({\bf x,y}))
    {f}(\bold x,\bold y)d \bold y<\infty$ for almost all $\bold x$
    with respect to $g(\bold x)d \bold x$, and
  \item $E_{\bmath{\lambda}}[h_i(\bold X,\bold Y)]=c_i,$ for
    $i=1,\ldots,k,$
  \end{enumerate}
  then $f_{\bmath{\lambda}}(\cdot)$ is an optimal solution to the
  optimization problem ${\bf O_1}$.
  \label{THM-OPT-SOL}
\end{theorem}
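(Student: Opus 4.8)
Since $D(\cdot\,\|\,f)$ is convex in $\tilde f$ while the marginal and moment constraints are affine in $\tilde f$, problem $\mathbf{O_1}$ is a convex program, so it suffices to show that $f_{\bmath{\lambda}}$ is feasible and dominates every feasible density. My plan is to prove the Pythagorean (information-projection) identity
\[
  D(\tilde f\,\|\,f)\;=\;D\bigl(\tilde f\,\|\,f_{\bmath{\lambda}}\bigr)\;+\;D\bigl(f_{\bmath{\lambda}}\,\|\,f\bigr)\qquad\text{for every feasible }\tilde f ,
\]
from which optimality of $f_{\bmath{\lambda}}$ is immediate because $D(\tilde f\,\|\,f_{\bmath{\lambda}})\ge 0$; the same identity in fact yields uniqueness of the optimizer.

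First I would check that $f_{\bmath{\lambda}}$ is feasible. By assumption (a) the normalizer $Z_{\bmath{\lambda}}(\bold x):=\int_{\bold y}\exp\bigl(\textstyle\sum_i\lambda_i h_i(\bold x,\bold y)\bigr)f(\bold x,\bold y)\,d\bold y$ is finite for almost every $\bold x$ with respect to $g$, and it is strictly positive wherever the $\bold x$-marginal $f_X$ of $f$ is positive, hence $g$-almost everywhere once one uses that $g\ll f_X$ (implicit throughout, since otherwise the feasible set of $\mathbf{O_1}$ is empty). Thus $f_{\bmath{\lambda}}(\bold y\,|\,\bold x)$ and $f_{\bmath{\lambda}}(\bold x,\bold y)=f_{\bmath{\lambda}}(\bold y\,|\,\bold x)\,g(\bold x)$ are genuine densities; the $\bold x$-marginal of $f_{\bmath{\lambda}}$ is $g$ by construction, so the marginal constraint holds, the moment constraints hold by assumption (b), and $f=0$ forces $f_{\bmath{\lambda}}=0$, so $f_{\bmath{\lambda}}\ll f$.

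For the identity, note that on $\{f>0\}\cap\{g(\bold x)>0\}$, which up to null sets contains the supports of $f_{\bmath{\lambda}}$ and of every feasible $\tilde f$,
\[
  \log\frac{f_{\bmath{\lambda}}(\bold x,\bold y)}{f(\bold x,\bold y)}\;=\;\log g(\bold x)\;+\;\sum_{i=1}^k\lambda_i h_i(\bold x,\bold y)\;-\;\log Z_{\bmath{\lambda}}(\bold x).
\]
Fix a feasible $\tilde f$ with $D(\tilde f\,\|\,f)<\infty$ (otherwise there is nothing to prove). Since $\{\tilde f>0\}\subseteq\{f_{\bmath{\lambda}}>0\}$ up to null sets, one may write $\log(\tilde f/f)=\log(\tilde f/f_{\bmath{\lambda}})+\log(f_{\bmath{\lambda}}/f)$ $\tilde f$-almost everywhere; integrating against $\tilde f$ and using that its $\bold x$-marginal is $g$ and that $\int h_i\tilde f=c_i$, the cross term equals
\[
  \int g(\bold x)\log g(\bold x)\,d\bold x\;+\;\sum_{i=1}^k\lambda_i c_i\;-\;\int g(\bold x)\log Z_{\bmath{\lambda}}(\bold x)\,d\bold x ,
\]
which does not depend on $\tilde f$. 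Evaluating at $\tilde f=f_{\bmath{\lambda}}$ identifies this constant as $D(f_{\bmath{\lambda}}\,\|\,f)$, and the displayed identity follows; hence $D(\tilde f\,\|\,f)\ge D(f_{\bmath{\lambda}}\,\|\,f)$, i.e.\ $f_{\bmath{\lambda}}$ is optimal.

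I expect the bulk of the effort to be measure-theoretic rather than conceptual: justifying $g\ll f_X$, $f_{\bmath{\lambda}}\ll f$, and the support inclusion $\{\tilde f>0\}\subseteq\{f_{\bmath{\lambda}}>0\}$; and making sure the split $\log(\tilde f/f)=\log(\tilde f/f_{\bmath{\lambda}})+\log(f_{\bmath{\lambda}}/f)$ never generates an $\infty-\infty$. The latter is handled by dispatching the case $D(\tilde f\,\|\,f)=\infty$ at the outset and, in the finite case, verifying the cross term above is a finite number; this rests on integrability of each $h_i$ under feasible $\tilde f$ (implicit in the moment constraints) together with $D(g\,\|\,f_X)<\infty$, which follows from the chain rule $D(f_{\bmath{\lambda}}\,\|\,f)=D(g\,\|\,f_X)+\int g(\bold x)\,D\bigl(f_{\bmath{\lambda}}(\cdot\,|\,\bold x)\,\|\,f(\cdot\,|\,\bold x)\bigr)\,d\bold x$ when $D(f_{\bmath{\lambda}}\,\|\,f)$ is finite (an infinite value, if it can occur under (a)--(b), would need a separate truncation argument on the $h_i$).
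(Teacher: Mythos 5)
Your proof is correct, but it follows a genuinely different route from the paper's. You argue directly in the primal: you verify feasibility of $f_{\bmath{\lambda}}$ and then establish the Pythagorean identity $D(\tilde f\,\|\,f)=D(\tilde f\,\|\,f_{\bmath{\lambda}})+D(f_{\bmath{\lambda}}\,\|\,f)$ by observing that $\log(f_{\bmath{\lambda}}/f)=\log g(\bold x)+\sum_i\lambda_i h_i-\log Z_{\bmath{\lambda}}(\bold x)$ integrates to the same constant against every feasible $\tilde f$ (precisely because the marginal and moment constraints pin down those integrals). The paper instead first uses the chain rule for relative entropy to reduce the problem, for each fixed $\bold x$, to minimizing the conditional divergence $D(\tilde f(\cdot|\bold x)\,\|\,f(\cdot|\bold x))$ subject to the moment constraints, then forms the Lagrangian in the multipliers $\delta_i$ and invokes the Gibbs variational formula \eqref{VAR-REP-REL-ENT}--\eqref{eqn:opt_measure} pointwise in $\bold x$ to identify the inner minimizer as the exponential tilt, concluding by weak duality once $\bmath{\delta}=\bmath{\lambda}$ makes the minimizer feasible. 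The two arguments rest on the same computation (the variational formula is itself proved via your cross-term identity), but yours buys two things the paper's does not: it delivers uniqueness of the optimizer essentially for free from $D(\tilde f\,\|\,f_{\bmath{\lambda}})>0$ for $\tilde f\neq f_{\bmath{\lambda}}$, and it makes explicit the measure-theoretic caveats ($g\ll f_X$, finiteness of the cross term, the possibility $D(f_{\bmath{\lambda}}\,\|\,f)=\infty$) that the paper's Lagrangian argument glosses over when it drops the ``constant'' term $\int g\log(g/f_X)$; the paper's version, in exchange, separates the treatment of the two constraint types more transparently and avoids any support-inclusion bookkeeping.
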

\noindent It is natural to ask for conditions under which such a
$\bmath{\lambda} = (\lambda_1,\ldots,\lambda_k)$ exists. Here, we
provide a simple condition in Remark \ref{EXISTENCE-MAIN} below. A
further set of elaborate conditions can be found in Theorem 3.1 of
\cite{csiszar:1975}.
\begin{remark}[On the existence of $\bmath{\lambda}$]
  \label{EXISTENCE-MAIN}
  For every $(c_1,\ldots,c_k)$ in the interior of convex hull of the
  support of the probability density function induced on
  $\mathbb{R}^k$ by the mapping $({\bf x}, {\bf y}) \longmapsto
  (h_1({\bf x},{\bf y}), \ldots, h_k({\bf x},{\bf y})),$ it follows
  from Theorem 3.1 of \cite{csiszar:1975} that there exists a
  $(\lambda_1,\ldots,\lambda_k)$ satisfying conditions of Theorem
  \ref{THM-OPT-SOL}.
\end{remark}

\begin{proof}[Proof of Theorem \ref{THM-OPT-SOL}]
  In view of (\ref{eqn:constr_3}), we may fix the marginal
  distribution of $\bold X$ to be $g(\bold x)$ and re-express the
  objective as
  \[\min_{\tilde{f}(\cdot |\bold x) \in {\cal P}({f}(\cdot |\bold x)),
    \forall \bold x} \int_{\bold x,\bold y}
  \log\left(\frac{\tilde{f}(\bold y|\bold x)}{f(\bold y|\bold
      x)}\right)\tilde{f}(\bold y|\bold x) g(\bold x) d \bold y d
  \bold x+\int_{\bold x} \log\left(\frac{g(\bold x)}{f(\bold
      x)}\right)g(\bold x)d \bold x\,.\] The second integral is a
  constant and can be dropped from the objective. The first integral
  can be expressed as
  \[\int_{\bold x} \min_{\tilde{f}(\cdot
    |\bold x) \in {\cal P}({f}(\cdot |\bold x))} \left (\int_{\bold y}
    \log \frac{\tilde{f}(\bold y|\bold x)}{f(\bold y|\bold x)}
    \tilde{f}(\bold y|\bold x) d \bold y \right )g(\bold x) d \bold
  x\,.\] Similarly the moment constraints can be re-expressed as
  \[\int_{\bold x,\bold y} h_i(\bold x,\bold y) \tilde{f}(\bold
  y|\bold x)g(\bold x) d \bold x d \bold y = c_i,\,\,\,i=1,2,...,k.\]
  This, in turn, is same as:
  \[\int_{\bold x}\left (\int_{\bold y} h_i(\bold x,\bold y)
    \tilde{f}(\bold y|\bold x)d \bold y \right )g(\bold x)d \bold x =
  c_i,\,\,\,i=1,2,...,k\,.\] Then, the Lagrangian for this $k$
  constraint problem is,
  \[\int_{\bold x} \left [ \min_{\tilde{f}(\cdot |\bold x) \in {\cal
        P}({f}(\cdot |\bold x))} \int_{\bold y} \left ( \log
      \frac{\tilde{f}(\bold y|\bold x)}{f(\bold y|\bold x)}
      \tilde{f}(\bold y|\bold x) -\sum_{i=1}^k\delta_i h_i(\bold
      x,\bold y)\tilde{f}(\bold y|\bold x) \right)d \bold y \right ]
  g(\bold x) d \bold x+\sum_{i=1}^k\delta_i c_i,\] for $\delta_i \in
  \mathbb{R}.$ Note that by (\ref{eqn:opt_measure}),
  \[\min_{\tilde{f}(\cdot |\bold x) \in {\cal P}({f}(\cdot |\bold x))}
  \int_{\bold y} \left ( \log \frac{\tilde{f}(\bold y|\bold
      x)}{f(\bold y|\bold x)} \tilde{f}(\bold y|\bold x)
    -\sum_i\delta_i h_i(\bold x,\bold y)\tilde{f}(\bold y|\bold x)
  \right )dy\] has the solution
  \[f_{\bmath{\delta}}(\bold y|\bold x)= \frac{\exp (\sum_i\delta_i
    h_i(\bold x,\bold y)){f}(\bold y|\bold x)}{\int_{\bold y} \exp
    (\sum_i\delta_i h_i(\bold x,\bold y)) {f}(\bold y|\bold x)d \bold
    y}=\frac{\exp (\sum_i\delta_i h_i(\bold x,\bold y)){f}(\bold
    x,\bold y)}{\int_{\bold y} \exp (\sum_i\delta_i h_i(\bold x,\bold
    y)) {f}(\bold x,\bold y)d \bold y},\] where we write
  $\bmath{\delta}$ for $(\delta_1,\delta_2,\ldots,\delta_k)$.  Now
  taking $\bmath{\delta}=\bmath{\lambda}$, it follows from the
  Assumptions (a) and (b) in the statement of Theorem
  \ref{THM-OPT-SOL} that $f_{\bmath{\lambda}}(\bold x,\bold
  y)=f_{\bmath{\lambda}}(\bold y|\bold x)g(\bold x)$ is a solution to
  the optimization problem ${\bf O_1}\,.$ \qedhere
\end{proof}
\noindent In Theorem \ref{thm:exist_unique}, we give conditions that
ensure uniqueness of a solution to the optimization problem ${\bf O_1}$
whenever it exists. The proof of Theorem \ref{thm:exist_unique} is
presented in the Appendix.
\begin{theorem}
  Suppose that for almost all ${\bold x}$ w.r.t. $g(\bold x)d \bold
  x$, conditional on $\bold X=\bold x$, the random variables
  $h_1(\bold x,\bold Y),h_2(\bold x,\bold Y),\ldots,h_k(\bold x,\bold
  Y)$ are linearly independent. Then, if a solution to the constraint
  equations
  \[E_{\bmath{\lambda}}[h_i(\bold X,\bold Y)]=c_i, i=1,\ldots,k\]
  exists, it is unique.
  \label{thm:exist_unique}
\end{theorem}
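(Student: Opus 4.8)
The plan is to prove uniqueness by showing that any two solutions of the constraint equations must produce the \emph{same} posterior density, and then to recover equality of the multipliers from the conditional linear-independence hypothesis. So suppose $\bmath{\lambda}=(\lambda_1,\dots,\lambda_k)$ and $\bmath{\mu}=(\mu_1,\dots,\mu_k)$ both satisfy conditions (a)--(b) of Theorem~\ref{THM-OPT-SOL}. By that theorem, $f_{\bmath{\lambda}}$ and $f_{\bmath{\mu}}$ are both optimal solutions of ${\bf O_1}$. Now the objective $\tilde f\mapsto D(\tilde f\,\|\,f)$ is strictly convex on its effective domain (it equals $\int\phi(\tilde f/f)\,f$ with $\phi(t)=t\log t$ strictly convex), and the feasible set of ${\bf O_1}$ is convex, being cut out by the linear equality constraints \eqref{eqn:constr_3}--\eqref{eqn:constr_4}; hence, provided the optimal value is finite, the minimizer is unique, and therefore $f_{\bmath{\lambda}}=f_{\bmath{\mu}}$ almost everywhere.

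One can also reach this conclusion without invoking finiteness of the optimal value. Since $f_{\bmath{\lambda}}$ and $f_{\bmath{\mu}}$ are obtained from $f$ by conditional exponential tilting, they have the same support and hence are mutually absolutely continuous, and they share the $\bold X$-marginal $g$, so the joint log-ratio $\log(f_{\bmath{\lambda}}/f_{\bmath{\mu}})$ equals $\sum_i(\lambda_i-\mu_i)h_i(\bold x,\bold y)$ plus a function of $\bold x$ alone. Taking expectations under $f_{\bmath{\lambda}}$ and under $f_{\bmath{\mu}}$ and using condition (b), the $\sum_i(\lambda_i-\mu_i)h_i$ term contributes $\sum_i(\lambda_i-\mu_i)c_i$ in both cases, so these cross terms cancel when one adds $D(f_{\bmath{\lambda}}\,\|\,f_{\bmath{\mu}})$ and $D(f_{\bmath{\mu}}\,\|\,f_{\bmath{\lambda}})$; the sum is then $0$, and nonnegativity of each summand forces both to vanish, i.e.\ $f_{\bmath{\lambda}}=f_{\bmath{\mu}}$.

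Once the two densities coincide, restricting to $\{\bold x:g(\bold x)>0\}$ and cancelling the common factor $g(\bold x)$ gives $f_{\bmath{\lambda}}(\bold y\mid\bold x)=f_{\bmath{\mu}}(\bold y\mid\bold x)$ for almost every $\bold y$, for almost every $\bold x$ with respect to $g(\bold x)\diff\bold x$. Taking logarithms in the explicit formula for $f_{\bmath{\lambda}}(\bold y\mid\bold x)$ then yields
\[\sum_{i=1}^k(\lambda_i-\mu_i)\,h_i(\bold x,\bold y)=\log Z_{\bmath{\lambda}}(\bold x)-\log Z_{\bmath{\mu}}(\bold x)\qquad\text{for a.e.\ }\bold y,\]
where $Z_{\bmath{\lambda}}(\bold x)$ denotes the normalizing denominator in the definition of $f_{\bmath{\lambda}}(\bold y\mid\bold x)$. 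The right-hand side is independent of $\bold y$, so, conditional on $\bold X=\bold x$, the random variable $\sum_i(\lambda_i-\mu_i)h_i(\bold x,\bold Y)$ is almost surely equal to a constant; the hypothesis that $h_1(\bold x,\bold Y),\dots,h_k(\bold x,\bold Y)$ are conditionally linearly independent then forces $\lambda_i-\mu_i=0$ for every $i$.

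Two points will need care. The first is routine integrability bookkeeping: to justify the expectation manipulations above (and to know the optimal value of ${\bf O_1}$ is finite, for the convexity route) one needs $\log Z_{\bmath{\lambda}}(\cdot)$ integrable against $g$ and the relevant relative entropies finite, which follows from condition (a), finiteness of the $c_i$, and the standing assumption $D(g\,\|\,f_{\bold X})<\infty$, where $f_{\bold X}$ is the $\bold X$-marginal of the prior $f$. The genuine obstacle, however, is the very last implication: passing from ``$\sum_i a_i h_i(\bold x,\bold Y)$ is conditionally almost surely constant'' to ``$a_i=0$ for all $i$.'' The additive constant is the log-ratio of the two normalizers and cannot be dropped for free, so to eliminate it one must use the linear-independence hypothesis in its effective form --- that no nontrivial linear combination of the $h_i(\bold x,\bold Y)$ is almost surely constant (equivalently, that $1,h_1(\bold x,\bold Y),\dots,h_k(\bold x,\bold Y)$ are linearly independent); indeed, a constant $h_i$ would break uniqueness outright, so some such reading of the hypothesis is unavoidable.
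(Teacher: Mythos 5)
Your proof is correct, and it takes a genuinely different route from the paper's. The paper argues entirely in the dual: it defines
\[F(\bmath{\lambda})=\int_{\bold x}\log\Big(\int_{\bold y}e^{\sum_l\lambda_l h_l(\bold x,\bold y)}f(\bold y|\bold x)\,d\bold y\Big)g(\bold x)\,d\bold x-\sum_l\lambda_l c_l,\]
observes that the constraint equations are exactly $\nabla F=0$, computes the Hessian $\partial^2F/\partial\lambda_i\partial\lambda_j=\E_{g}\left[\textnormal{Cov}_{\bmath{\lambda}}(h_i(\bold X,\bold Y),h_j(\bold X,\bold Y)\mid\bold X)\right]$, and uses the linear-independence hypothesis to declare this matrix positive definite, so that $F$ is strictly convex and has at most one critical point. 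You argue in the primal: any two multiplier vectors satisfying the constraints yield the same posterior density, and the multipliers are then recovered from the log-ratio. Of your two variants, the symmetrized-divergence cancellation is the better one --- it is the standard Pythagorean argument for exponential families, and unlike the first variant it needs neither finiteness of the optimal value of ${\bf O_1}$ nor the fact that $f_{\bmath{\lambda}}$ is actually a minimizer, only the explicit form of the tilted densities and the constraint equations. Your closing caveat is exactly right, and it applies equally to the paper: for the aggregated conditional covariance matrix to be positive definite one must read ``linearly independent'' as ``no nontrivial linear combination of $h_1(\bold x,\bold Y),\ldots,h_k(\bold x,\bold Y)$ is almost surely constant given $\bold X=\bold x$,'' which is precisely the strengthening you need in order to absorb the additive term $\log Z_{\bmath{\lambda}}(\bold x)-\log Z_{\bmath{\mu}}(\bold x)$ coming from the normalizers; so both proofs consume the hypothesis in the same effective form, the paper merely doing so implicitly. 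The integrability bookkeeping you defer is likewise left implicit in the paper, which differentiates under the integral sign twice without comment, so you are not at a disadvantage there. What the paper's route buys in exchange is reuse: its Hessian computation reappears verbatim in Remark \ref{sensitivity} as the matrix $V$ of sensitivities $\partial c_i/\partial\lambda_j$.
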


\begin{remark}
  \label{change:var} 
  Theorem \ref{THM-OPT-SOL}, as stated, is applicable when the updated
  marginal distribution of a sub-vector $\bold X$ of the given random
  vector $(\bold X,\bold Y)$ is specified.  More generally,
  constraints on marginal densities and moments of functions of the
  given random vector can also be incorporated by a routine change of
  variable technique. This is illustrated below:

  Let $\bold Z=(Z_1,Z_2,\ldots,Z_N)$ denote a random vector taking
  values in $S \subseteq \mathbb{R}^N$ and having a (prior) density
  function $f_{\bold Z}(\cdot).$ Suppose the constraints on ${\bf Z}$
  are as follows:
  \begin{enumerate}[(i)]
    \item $\left(v_1(\bold Z),v_2(\bold Z),\ldots,v_{k_1}(\bold
        Z)\right)$ have a joint density function given by $g(\cdot)$.
    \item The moments of $v_{k_1+1}(\bold Z),\ldots,v_{k_2}(\bold Z)$
      are, respectively, $c_1,\ldots,c_{k_2-k_1}.$
    \end{enumerate}
    where $0\leq k_1\leq k_2\leq N$ and
    $v_{1}(\cdot),\,v_{2}(\cdot),\,\ldots,\,v_{k_2}(\cdot)$ are some
    functions on $S$. If the total number of constraints $k_2$ is
    smaller than $N,$ we define $N-k_2$ additional functions
    $v_{k_2+1}(\cdot),\,v_{k_2+2}(\cdot),\,\ldots,\,v_{N}(\cdot)$ such
    that the function $v:S\rightarrow\mathbb{R}^N$ defined by $v(\bold
    z)=\left(v_1(\bold z),v_2(\bold z),\ldots,v_N(\bold z)\right)$ has
    a non-singular Jacobian almost everywhere.  That is,
    \[J(\bold z):=\text{det}\left(\left(\frac{\partial v_i}{\partial
          z_j}\right)_{i,j}\right)\neq 0\,\,\text{for almost all}\,\,
    \bold z\,\,\text{w.r.t.}\,f_{\bold Z}.\] 
    This happens if the function $v$ is locally invertible almost
    everywhere. Now to compute the posterior density that minimizes
    the relative entropy with respect to the prior density $f_{\bf
      Z}(\cdot)$ while satisfying constraints (i) and (ii), we let
    \begin{align*}
      X_i= v_i(\bold Z) \text{ for } i \leq k_1, \bold X = (X_1,
      \ldots, X_{k_1}),\text{ and }\\
      Y_i=v_{k_1+i}(\bold Z) \text{ for } i \leq N-k_1 , \bold Y =
      (Y_1, \ldots, Y_{N-k_1}).
    \end{align*}
    If we use $f(\cdot, \cdot)$ to denote the prior density function
    corresponding to $(\bold X, \bold Y)$ and $w(\cdot)$ to denote the
    local inverse function of $v(\cdot),$ then by the change of
    variables formula for densities,
    \[f(\bold x,\bold y)=f_{\bold Z}\left(w(\bold x, \bold y)\right)
    [J\left(w(\bold x, \bold y)\right)]^{-1}.\]
    Further, the constraints (i) and (ii) translate in terms of
    $(\bold X,\bold Y)$ into:
    \begin{enumerate}[(a)]
    \item $\bold X \,\,\text{have joint density given by}\,\,g(\cdot)$
      and
    \item For $i=1,\ldots,k_2-k_1,$ the expected value of $Y_i$ is
      $c_i.$
    \end{enumerate}
    Setting $k=k_2-k_1$, it follows from Theorem \ref{THM-OPT-SOL}
    that the optimal joint density function of $(\bold X,\bold Y)$ is:
    \[f_{\bmath{\lambda}}(\bold x,\bold y)=
    \frac{e^{\lambda_1y_1+\lambda_2y_2+\cdots+\lambda_ky_k}f(\bold
      x,\bold y)}
    {\int_ye^{\lambda_1y_1+\lambda_2y_2+\cdots+\lambda_ky_k}f(\bold
      x,\bold y)\,d\bold y}\times g(\bold x)\,,\] where $\bold
    \lambda_k$s are chosen such that $\E_{\bmath{\lambda}}[Y_i] = c_i,
    i=1,\ldots,k.$ Again by changing the variables, it follows that
    the optimal density of $\bold Z$ is given by:
    \[\tilde{f}_{\bold Z}(\bold z)=f_{\bmath{\lambda}}(v_1(\bold
    z),v_2(\bold z),\ldots,v_N(\bold z))J(\bold z)\,\,.\,\,\] It can
    be easily seen that the case of Jacobian being identity matrix
    corresponds to no change of variables, and we recover the solution
    to the original optimization problem ${\bf O_1}.$
\end{remark}

\begin{remark}\label{sensitivity}  
  Suppose that a portfolio performance measure of interest is an
  expectation of some random variable $r(\bold X,\bold Y)$ under the
  posterior measure $f_{\bmath{\lambda}}({\bf x,y}).$ Few examples of
  this measure include expected portfolio return, portfolio variance,
  or the probability that the portfolio loss exceeds a threshold.  For
  $\bmath{\gamma} \in \mathbb{R}^k,$ let
  \[ \Pi(\bmath{\gamma}) = \E_{\bmath{\gamma}}[r(\bold X,\bold Y)].\]
  Then computing the sensitivity $\partial
  \Pi(\bmath{\lambda})/\partial c_i $ is of practical interest (recall
  that $c_i$ is specified in Equation \eqref{eqn:constr_4}).  This
  follows through an easy extension of analysis in \cite{avel1}.  Note
  that
  \begin{align*}
    \frac{\partial \Pi(\bmath{\lambda})}{\partial c_i } = \sum_j \left
      ( \frac{\partial \Pi(\bmath{\gamma})}{\partial \gamma_j }\right
    )_{\bmath{\gamma}=\bmath{\lambda}} \frac{\partial
      \lambda_j}{\partial c_i}. 
  \end{align*}
  Further we have $\E_{\bmath{\lambda}}[h_i({\bf X,Y})] = c_i.$
  Following the lines of proof in the Appendix of \cite{buch:kelly},
  it can be verified that
  \begin{align*}
    \frac{\partial \Pi(\bmath{\gamma})}{\partial \gamma_j } &= \E
    \left[\textnormal{Cov}_{\bmath{\gamma}}(r(\bold X,\bold Y),
      h_j(\bold
      X,\bold Y)| \bold X)\right], \text{ and }\\
    \frac{\partial c_i} {\partial \lambda_j} &= \E
    \left[\textnormal{Cov}_{\bmath{\lambda}}(h_i(\bold X,\bold Y),
      h_j(\bold X,\bold Y)| \bold X)\right],
\end{align*}
where
\[\textnormal{Cov}_{\bmath{\gamma}}(r(\bold X,\bold
Y), h_j(\bold X,\bold Y)| \bold X) = \E_{\bmath{\gamma}}[r(\bold
X,\bold Y) h_j(\bold X,\bold Y)| \bold X] -
\E_{\bmath{\gamma}}[r(\bold X,\bold Y)| \bold X)
\E_{\bmath{\gamma}}(h_j(\bold X,\bold Y)| \bold X].\] Let
$V_{ij}=\partial c_i/\partial \lambda_j$ and $V$ be the matrix with
$V_{ij}$ as its entries. Let $U=V^{-1}.$ Then using Implicit Function
Theorem on $\E_{\bmath{\lambda}}[h_i({\bf X,Y})] = c_i,$ we have that
\[\frac{\partial \lambda_j}{\partial c_i}= U_{ij},\]
where $U_{ij}$ is the $(i,j)^{th}$ entry of the matrix $U.$ In
particular,
\[ \frac{\partial \Pi(\bmath{\lambda})}{\partial c_i } = \sum_j \E
\left[\textnormal{Cov}_{\bmath{\lambda}}(r(\bold X,\bold Y), h_j(\bold
  X,\bold Y)| \bold X)\right]U_{ij}.\] Similarly, suppose that the
density function $g(\cdot)$ depends on a parameter $\alpha,$ which we
express as $g_{\alpha}(\cdot),$ then it follows that
\[\frac{\partial \Pi(\bmath{\lambda}) }{\partial \alpha}=
\E_{\bmath{\lambda}}\left[r(\bold X,\bold Y) \frac{ \frac{\partial
      g_{\alpha}(\bold X)}{\partial \alpha}}{g_{\alpha}(\bold
    X)}\right].\] 
\end{remark}

\section{Portfolio Modeling  in Markowitz Framework}
\label{SEC-PORTFOLIO-APP}
In this section we apply the methodology developed in Section~3 to the
Markowitz framework: Namely to the setting where there are $N$ assets
whose returns under the `prior distribution' are multivariate
Gaussian.  Here, we explicitly identify the posterior distribution
that incorporates views/constraints on marginal distribution of some
random variables and moment constraints on other random variables.  As
mentioned in the introduction, an important application of our
approach is that if for a particular portfolio of assets, say an
index, it is established that the return distribution is fat-tailed
(specifically, the pdf is a regularly varying function), say with the
density function $g(\cdot)$, then by using that as a constraint, one
can arrive at an updated posterior distribution for all the underlying
assets.  Furthermore, we show that if an underlying asset has a
non-zero correlation with this portfolio under the prior distribution,
then under the posterior distribution, this asset has a tail
distribution similar to that given by $g(\cdot)$.

Let $(\bold X,\bold Y)=(X_1,X_2,\ldots,X_{N-k},Y_1,Y_2,\ldots,Y_k)$
have a $N$ dimensional multivariate Gaussian distribution with mean
$\bmath{\mu}=(\bmath{\mu}_{\bold x},\bmath{\mu}_{\bold y})$ and the
variance-covariance matrix
\[\bold\Sigma=\left(\begin{array}{cc}\bold\Sigma_{\bold{xx}}&\bold\Sigma_{\bold  
      {xy}}\\ \bold\Sigma_{\bold
      {yx}}&\bold\Sigma_{\bold{yy}}
  \end{array}\right).\]
Let $g(\cdot)$ be a given probability density function on
$\mathbb{R}^{N-k}$ with finite first moments along each component and
$\bold a$ be a given vector in\ $\mathbb{R}^k$. Then we look for a
posterior measure $\tilde{\Pr}(\cdot)$ that satisfies the view that
\[{\bf X} \,\,\text{has probability density function}\,\,
g(\cdot)\,\,\text{and}\,\,\tilde{\E}(\bold Y)=\bold a.\] As discussed
in Remark \ref{change:var} (see also Example \ref{6asset:example} in
Section 5), when the view is on marginal distributions of linear
combinations of underlying assets, and/or on moments of linear
functions of the underlying assets, the problem can be easily
transformed to the above setting by a suitable change of variables.

To find a distribution of $(\bold X,\bold Y)$ which incorporates the
above views, we solve the minimization problem ${\bf O_2}$:
\[\min_{\tilde{f} \in {\cal P}(f)} \int_{(\bold x,\bold
  y)\in\mathbb{R}^{N-k}\times \mathbb{R}^{k}}
\log\left(\frac{\tilde{f}(\bold x,\bold y)}{f(\bold x,\bold y)}\right)
\tilde{f}(\bold x,\bold y)\, d\bold xd\bold y\] subject to the
constraint:
\[\int_{\bold y\in\mathbb{R}^{k}} \tilde{f}(\bold x,\bold y) d\bold
y=g(\bold x) \,\,\,\, \text{ for all } \bold x\] and
\begin{equation}
\int_{\bold x\in\mathbb{R}^{N-k}}\int_{\bold y\in\mathbb{R}^{k}}\bold
y \tilde{f}(\bold x,\bold y) d\bold y d\bold x=\bold a, 
\label{mom:y}
\end{equation}
where\ $f(\bold x,\bold y)$ is the density of $N$-variate normal
distribution denoted by $\mathcal{N}_N(\bmath{\mu},\bold\Sigma)$.

\begin{proposition}
\label{normal:marg}
Under the assumption that $\Sigma_{\bold {xx}}$ is invertible, the
optimal solution to ${\bf O_2}$ is given by
\begin{equation}
  \tilde{f}(\bold x,\bold y)=\tilde{f}(\bold y|\bold x) \times g(\bold
  x) 
  \label{post:marg}
\end{equation}
where $\tilde{f}(\bold y|\bold x)$ is the probability density function
of 
\[\mathcal{N}_{k}\left(\bold a+\bold\Sigma_{\bold
    {yx}}\bold\Sigma_{\bold {xx}}^{-1}(\bold x-\E_g[\bold X])\,,\,
  \bold\Sigma_{\bold{yy}}-\bold\Sigma_{\bold {yx}}\bold\Sigma_{\bold
    {xx}}^{-1}\bold\Sigma_{\bold {xy}}\right)\] where $\E_g[\bold X]$
is the expectation of ${\bf X}$ under the density function $g(\cdot)$.
\end{proposition}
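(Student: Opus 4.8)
The plan is to read ${\bf O_2}$ as the special case of ${\bf O_1}$ with moment functions $h_i(\bold x,\bold y)=y_i$ and targets $c_i=a_i$, and then to make the recipe of Theorem~\ref{THM-OPT-SOL} explicit for a Gaussian prior. So the proof reduces to two things: (i) writing the exponentially twisted conditional density $f_{\bmath{\lambda}}(\bold y\mid\bold x)$ in closed form, and (ii) exhibiting a $\bmath{\lambda}\in\mathbb{R}^k$ for which conditions (a) and (b) of that theorem hold, and checking that the resulting $f_{\bmath{\lambda}}$ is the density in \eqref{post:marg}.

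First I would record the conditional law of the prior. Since $f$ is the $\mathcal{N}_N(\bmath{\mu},\bold\Sigma)$ density and $\bold\Sigma_{\bold{xx}}$ is invertible, $f(\bold y\mid\bold x)$ is the density of $\mathcal{N}_k\big(\bmath{\mu}_{\bold y}+\bold\Sigma_{\bold{yx}}\bold\Sigma_{\bold{xx}}^{-1}(\bold x-\bmath{\mu}_{\bold x}),\,\Lambda\big)$, where $\Lambda:=\bold\Sigma_{\bold{yy}}-\bold\Sigma_{\bold{yx}}\bold\Sigma_{\bold{xx}}^{-1}\bold\Sigma_{\bold{xy}}$ is the Schur complement; since $\bold\Sigma$ is a nondegenerate covariance matrix, $\Lambda$ is positive definite, hence invertible. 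Next, because $\sum_i\lambda_i h_i(\bold x,\bold y)=\bmath{\lambda}^{\top}\bold y$ is linear in $\bold y$, forming $e^{\bmath{\lambda}^{\top}\bold y}f(\bold y\mid\bold x)$ and renormalizing is the classical completing-the-square computation for Gaussians: $f_{\bmath{\lambda}}(\bold y\mid\bold x)$ is again Gaussian, with the \emph{same} covariance $\Lambda$ and mean translated by $\Lambda\bmath{\lambda}$. In particular the normalizing integral in condition (a) is a Gaussian moment generating function, finite for every $\bold x$ and every $\bmath{\lambda}$, so (a) holds automatically.

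It then remains to pick $\bmath{\lambda}$ so that condition (b), namely $\E_{\bmath{\lambda}}[\bold Y]=\bold a$, holds. Under $f_{\bmath{\lambda}}(\bold x,\bold y)=f_{\bmath{\lambda}}(\bold y\mid\bold x)g(\bold x)$ the tower property gives $\E_{\bmath{\lambda}}[\bold Y]=\bmath{\mu}_{\bold y}+\bold\Sigma_{\bold{yx}}\bold\Sigma_{\bold{xx}}^{-1}(\E_g[\bold X]-\bmath{\mu}_{\bold x})+\Lambda\bmath{\lambda}$, which is finite because $g$ has finite first moments along each component. Setting this equal to $\bold a$ and using invertibility of $\Lambda$ yields the unique $\bmath{\lambda}=\Lambda^{-1}\big(\bold a-\bmath{\mu}_{\bold y}-\bold\Sigma_{\bold{yx}}\bold\Sigma_{\bold{xx}}^{-1}(\E_g[\bold X]-\bmath{\mu}_{\bold x})\big)$. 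Substituting this $\bmath{\lambda}$ into the mean from the previous step, the $\bmath{\mu}_{\bold x}$ and $\bmath{\mu}_{\bold y}$ contributions cancel and the mean collapses to $\bold a+\bold\Sigma_{\bold{yx}}\bold\Sigma_{\bold{xx}}^{-1}(\bold x-\E_g[\bold X])$, with covariance still $\Lambda$ — exactly \eqref{post:marg}. Theorem~\ref{THM-OPT-SOL} then certifies that $f_{\bmath{\lambda}}(\bold x,\bold y)=\tilde f(\bold y\mid\bold x)g(\bold x)$ solves ${\bf O_2}$; uniqueness, if wanted, follows from Theorem~\ref{thm:exist_unique}, since conditional on $\bold X=\bold x$ the coordinates $Y_1,\dots,Y_k$ are linearly independent whenever $\Lambda$ is nonsingular.

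There is no substantive obstacle here: the argument is a direct specialization of Theorem~\ref{THM-OPT-SOL}. The only two points needing a careful line are the completing-the-square identity that turns the exponential tilt of a Gaussian into a pure mean shift by $\Lambda\bmath{\lambda}$ (leaving the covariance unchanged), and the positive-definiteness of the Schur complement $\Lambda$, which is precisely what lets the moment equation be solved for $\bmath{\lambda}$.
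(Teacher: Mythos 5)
Your proposal is correct and follows essentially the same route as the paper's own proof: specialize Theorem~\ref{THM-OPT-SOL} with $h_i({\bf x},{\bf y})=y_i$, note that exponentially tilting the Gaussian conditional $f({\bf y}\mid{\bf x})$ shifts its mean by $\bold\Sigma_{{\bf y}|{\bf x}}\bmath{\lambda}$ while preserving the covariance, then solve the moment equation for $\bmath{\lambda}$ and substitute back. Your explicit verification of condition (a) and the remark on positive definiteness of the Schur complement are small additions the paper leaves implicit, but the argument is the same.
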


\noindent {\bf Tail behavior of the marginals of the posterior
  distribution:} 
We now specialize to the case where $\bold X$ (also denoted by $X$) is
a real valued random variable so that $N=k+1$, and
Assumption~\ref{ass:ass_u_me} below is satisfied by pdf $g(\cdot).$
Specifically, $(X,\bold Y)$ is distributed as
$\mathcal{N}_{k+1}(\bmath{\mu},\bold\Sigma)$ with
$$\bmath{\mu}^T=(\mu_x,\bmath{\mu}_{\bold y}^T) \,\,\,
\text{and}\,\,\,
\bold\Sigma=\left(\begin{array}{cc}\sigma_{xx}&\bmath{\sigma}_{x\bold
      y}^T\\ \bmath{\sigma}_{x\bold
      y}&\bold\Sigma_{\bold{yy}} \end{array}\right)$$
where\ $\bmath{\sigma}_{x\bold
  y}=(\sigma_{xy_{_1}},\sigma_{xy_{_2}},...,\sigma_{xy_{_k}})^T$\ with
$\sigma_{xy_{_i}}=\textnormal{Cov}(X,Y_i).$ 

\begin{assumption}  
  {\em The pdf $g(\cdot)$ is regularly varying: that is, there exists
    a constant $\alpha>1$ (we require $\alpha >1$ so that $g(\cdot)$
    is integrable) such that \[\lim_{t \rightarrow \infty}\frac{g(\eta
      t)}{g(t)}= \frac{1}{\eta^{\alpha}}\] for all $\eta>0$ (see, for
    instance, \cite{feller:two}). In addition, for any
    $a\in\mathbb{R}$ and $b\in\mathbb{R}^+,$
  \begin{equation}  
    \frac{g(b(t-s-a))}{g(t)} \leq h(s)
    \label{eqn:1101}
  \end{equation}
  for some non-negative function $h(\cdot)$ independent of $t$ (but
  possibly depending on $a$ and $b$) with the property that $\E h(Z) <
  \infty$ whenever $Z$ has a Gaussian distribution.}
\label{ass:ass_u_me}
\end{assumption}

\begin{remark}
  Assumption \ref{ass:ass_u_me} holds, for instance, when $g(\cdot)$
  corresponds to $t$-distribution with $n$ degrees of freedom, that
  is,
  \[g(s)=\frac{\Gamma(\frac{n+1}{2})}{\sqrt{n\pi}\Gamma(\frac{n}{2})}\left(1+\frac{s^2}{n}
  \right)^{-\left(\frac{n+1}{2}\right)}\,,\] Clearly, $g(\cdot)$ is
  regularly varying with $\alpha = n+1$.  To see (\ref{eqn:1101}),
  note that \[\frac{g(b(t-s-a))}{g(t)}=
  \frac{(1+t^2/n)^{(n+1)/2}}{(1+b^2(t-s-a)^2/n)^{(n+1)/2}}\,\,\,.\]
  Putting $t'={bt}/{\sqrt{n}}, s'={b(s+a)}/{\sqrt{n}}$ and $c={1}/{b}$
  we have \[\frac{(1+t^2/n)}{(1+b^2(t-s-a)^2/n)}=\frac{1+c^2t'^2}{1+(t'-s')^2}\,\,\,.\]
  Now (\ref{eqn:1101}) readily follows from the fact that
    \[\frac{1+c^2t'^2}{1+(t'-s')^2}\leq \max\{1,c^2\}+c^2s'^2+c^2|s'|,\] 
    for any two real numbers $s'$ and $t'$.  To verify the last
    inequality, note that if $t'\leq s'$ then
    $\frac{1+c^2t'^2}{1+(t'-s')^2}\leq 1+c^2s'^2$ and if $t'>s'$ then
    \begin{eqnarray*}
      \frac{1+c^2t'^2}{1+(t'-s')^2}=\frac{1+c^2(t'-s'+s')^2}{1+(t'-s')^2}
      &=&
      \frac{1+c^2(t'-s')^2}{1+(t'-s')^2}+c^2s'^2+c^2s'\frac{2(t'-s')}{1+(t'-s')^2}\\
      &\leq&  \max\{1,c^2\}+c^2s'^2+c^2|s'|.
    \end{eqnarray*}
    Note that if $h(x)=x^{m}$ or $h(x)= \exp(\lambda x)$ for any $m$
    or $\lambda$ then the last condition in Assumption
    \ref{ass:ass_u_me} holds.
\end{remark}
From Proposition (\ref{normal:marg}), we note that the posterior
distribution of $(X,\bold Y)$ is $\tilde{f}(x,\bold
y)=g(x)\times\tilde{f}(\bold y|x),$ where $\tilde{f}(\bold y|x)$ is
the probability density function of
\[\mathcal{N}_k\left(\bold
  a+\left(\frac{x-\E_g(X)}{\sigma_{xx}}\right)\bmath{\sigma}_{x\bold
    y},
  \bold\Sigma_{\bold{yy}}-\frac{1}{\sigma_{xx}}\bmath{\sigma}_{x\bold
    y}\bmath{\sigma}_{x\bold y}^t\right),\] where $\E_g(X)$ is the
expectation of $X$ under the density function $g(\cdot).$ Let
$\tilde{f}_{Y_1}(\cdot)$ denote the marginal density of $Y_1$ under
the above posterior distribution.  Theorem~\ref{tail} states a key
result of this section.
\begin{theorem}
\label{tail}
Under Assumption \ref{ass:ass_u_me}, if $\sigma_{xy_{_1}}\neq 0$, then 
\begin{equation}\label{limit:marg}
  \lim_{s\to\infty}\frac{\tilde{f}_{Y_{_1}}(s)}{g(s)}=\left  
    (\frac{\sigma_{xy_1}}{\sigma_{xx}} \right )^{\alpha-1}. 
\end{equation}
\end{theorem}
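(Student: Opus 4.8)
The plan is to write the posterior marginal $\tilde f_{Y_{1}}$ as an integral of $g$ against a Gaussian kernel and then extract its tail by dominated convergence. By Proposition~\ref{normal:marg}, in the specialized form displayed just before the theorem, the posterior density factorizes as $\tilde f(x,\bold y)=g(x)\,\tilde f(\bold y\mid x)$ with $\tilde f(\cdot\mid x)$ Gaussian; marginalizing the conditional density over $y_2,\dots,y_k$ and folding the result against $g$ gives
\[
\tilde f_{Y_{1}}(s)=\int_{\mathbb{R}}g(x)\,\phi_\tau\!\big(s-\beta x-\mu_0\big)\,dx ,
\]
where $\beta=\sigma_{xy_1}/\sigma_{xx}$, $\mu_0=a_1-\beta\,\E_g(X)$, $\tau^2$ is the ($x$-independent) conditional variance of $Y_1$ given $X=x$, and $\phi_\tau$ is the centered Gaussian density with variance $\tau^2$. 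If $\tau=0$ the conditional law is degenerate and the claim is immediate from the regular variation of $g$, so assume $\tau>0$; also assume $\sigma_{xy_1}>0$ (hence $\beta>0$), so that the right-hand side of \eqref{limit:marg} is well defined, the case $\sigma_{xy_1}<0$ being symmetric after replacing $Y_1$ by $-Y_1$.

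First I would divide by $g(s)$ and substitute $u=s-\beta x-\mu_0$, which gives
\[
\frac{\tilde f_{Y_{1}}(s)}{g(s)}=\frac{1}{\beta}\int_{\mathbb{R}}\frac{g\!\big(\tfrac{1}{\beta}(s-\mu_0-u)\big)}{g(s)}\,\phi_\tau(u)\,du .
\]
Then I would compute the integrand's pointwise limit as $s\to\infty$ for fixed $u$. Writing
\[
\frac{g\!\big(\tfrac{1}{\beta}(s-\mu_0-u)\big)}{g(s)}=\frac{g\!\big(\tfrac{1}{\beta}(s-\mu_0-u)\big)}{g(s-\mu_0-u)}\cdot\frac{g(s-\mu_0-u)}{g(s)} ,
\]
the first factor tends to $(1/\beta)^{-\alpha}=\beta^{\alpha}$ by the defining regular-variation limit (applied at argument $t=s-\mu_0-u\to\infty$ with dilation $\eta=1/\beta$), while the second tends to $1$, because shifting the argument of a regularly varying function by a constant is asymptotically negligible; this step uses standard properties of regularly varying functions (the uniform convergence theorem, equivalently Potter's bounds). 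Hence the integrand converges pointwise to $\beta^{\alpha}\phi_\tau(u)$.

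To pass the limit under the integral I would invoke the dominating bound built into Assumption~\ref{ass:ass_u_me}. Applying \eqref{eqn:1101} with its parameters $(t,s,a,b)$ instantiated as $(s,u,\mu_0,1/\beta)$ gives $g\big(\tfrac{1}{\beta}(s-\mu_0-u)\big)/g(s)\le h(u)$ for a nonnegative $h$ that does not depend on $s$ and satisfies $\E\,h(Z)<\infty$ for Gaussian $Z$; in particular $\int_{\mathbb{R}}h(u)\phi_\tau(u)\,du=\E\,h(\tilde Z)<\infty$ with $\tilde Z\sim\mathcal{N}(0,\tau^2)$, so $u\mapsto h(u)\phi_\tau(u)$ is an integrable dominating function, uniformly in $s$. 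Dominated convergence then yields
\[
\lim_{s\to\infty}\frac{\tilde f_{Y_{1}}(s)}{g(s)}=\frac{1}{\beta}\int_{\mathbb{R}}\beta^{\alpha}\phi_\tau(u)\,du=\beta^{\alpha-1}=\left(\frac{\sigma_{xy_1}}{\sigma_{xx}}\right)^{\alpha-1},
\]
which is \eqref{limit:marg}.

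I expect the main obstacle to be the rigorous treatment of the pointwise limit, in particular the assertion $g(s-\mu_0-u)/g(s)\to1$, which needs slightly more than the bare definition of regular variation and is where the uniform convergence theorem enters, together with the bookkeeping required to cast the integral's kernel in the exact form of \eqref{eqn:1101} so that the dominating function is integrable against the Gaussian density $\phi_\tau$ rather than against the prior Gaussian --- which is precisely why Assumption~\ref{ass:ass_u_me} demands $\E\,h(Z)<\infty$ for every Gaussian $Z$. Everything else is routine.
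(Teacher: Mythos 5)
Your proof is correct and follows essentially the same route as the paper's: both express $\tilde f_{Y_1}(s)/g(s)$ as an integral of $g(b(s-u-a))/g(s)$ against a Gaussian kernel, split that ratio into a dilation factor (giving $(\sigma_{xy_1}/\sigma_{xx})^{\alpha}$) and a shift factor (giving $1$), and pass to the limit by dominated convergence using the bound \eqref{eqn:1101}. The only difference is cosmetic --- you integrate out $y_2,\dots,y_k$ first to get a one-dimensional kernel, while the paper performs a $k$-dimensional change of variables --- and your explicit appeal to the uniform convergence theorem for the shift factor, and your handling of the sign of $\sigma_{xy_1}$, are if anything slightly more careful than the original.
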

From (\ref{limit:marg}), we have that
\[\lim_{x \rightarrow
  \infty}\frac{\tilde{P}(Y_1 >x) } {P(X>x)} = \left
  (\frac{\sigma_{xy_{_1}}}{\sigma_{xx}} \right )^{\alpha-1},\] where
$\tilde{P}(\cdot)$ denotes the posterior probability measure
associated with ${\bf Y}$.\\

\subsection{Numerical Experiments}
\label{numeric:exp}
To facilitate visual comparisons, we first consider a small two asset
portfolio model in Markowitz framework in Example \ref{2asset:example}
where we observe how the view that a portfolio has a fat tailed
distribution affects the marginal distribution of the individual
assets.  We then consider a more realistic setting involving a
portfolio of 6 global indices whose VaR (value-at-risk) is
evaluated. The model parameters are estimated from historical data.
We then use the proposed methodology to incorporate a view that return
from one of the index has a $t$-distribution, along with views on the
moments of returns of some linear combinations of the indices.

\begin{example}
\label{2asset:example}
\em{ We consider a small portfolio modeling example involving two
  assets $A_1$ and $A_2$.  We assume that the prior distribution of
  returns $(Z_1, Z_2)$ from assets $(A_1, A_2)$ is bivariate Gaussian.
  Specifically,
\[\left[\begin{array}{c}Z_1\\Z_2\end{array}\right]\sim
\mathcal{N}\left(\left[\begin{array}{c}1\\1\end{array}\right],\left[\begin{array}{cc}9.1&3.0\\  
      3.0&1.1\end{array}\right]\right).\]

Suppose the portfolio management team has the following views on these
securities:
\begin{enumerate}[(i)]
\item A bench mark portfolio consisting of $70\%$ in $A_1$ and $30\%$
  in $A_2$ is expected to generate $1.5\%$ average return, while
  having a much heavier tail compared to a Gaussian distribution.
  This may be modeled as a $t$-distribution with $3$ degrees of
  freedom and mean equaling $1.5\%$.
\item Security $A_2$ will generate $1.5\%$ average return.
\end{enumerate}
Let $X=0.7Z_1+0.3Z_2 \text{ and }Y=Z_2.$ Then the above views
correspond to $X$ having a density function given by
\[g(x)=\frac{2}{2.4120\times\pi\sqrt{3}[1+\frac{1}{3}(\frac{x-1.5}{2.4120})^2]^2},\]
and the expectation of $Y$ being equal to 1.5. Under the prior
distribution we have:
\[\left[\begin{array}{c}X\\Y\end{array}\right]=\left[\begin{array}{cc}0.7&0.3\\ 
    0.0&1.0\end{array}\right] 
\left[\begin{array}{c}Z_1\\Z_2\end{array}\right]\sim
\mathcal{N}\left(\left[\begin{array}{c}1\\1\end{array}\right], 
  \left[\begin{array}{ccc}5.818&2.43\\2.43&1.1\end{array}\right]\right).\]
Therefore we see that ${\sigma_{xx}}=5.818$, $\sigma_{xy}=2.43$ and
$\sigma_{yy}=1.1$, so that
$\sigma_{yy}-\frac{1}{\sigma_{xx}}\sigma_{xy}\sigma_{xy}^t=0.08506.$
In this case $a=1.5$. Therefore
\[a+\left(\frac{x-\E_g(X)}{\sigma_{xx}}\right)\sigma_{xy}=1.5+\frac{(x-1.5)}{5.818}\times
2.43=0.8735+0.4177 x.\] By Proposition \ref{normal:marg}, the
posterior distribution of $(X,Y)$ is given by
\[\frac{3.42876}{\sqrt{2\pi}}\text{exp}\left(-5.8782(y-0.41767
  x-0.8735)^2\right)\times\frac{2}{2.4120\times\pi\sqrt{3}\left(1+\frac{1}{3}
    (\frac{x-1.5}{2.4120})^2\right)^2}.\] Then the posterior
distribution of $(Z_1,Z_2)$ is given by
\[\tilde{f}(z_1,z_2)=\frac{3.42876\times
  0.7}{\sqrt{2\pi}}\text{exp}\left(-5.8782(0.29237z_1-0.8747z_2+0.8735)^2\right)\] 
\[\times\frac{2}{2.4120\times\pi\sqrt{3}\left(1+\frac{1}{3}(\frac{0.7z_1+0.3z_2-1.5}
    {2.4120})^2\right)^2}.\]

\begin{figure}[h]
\begin{center}
  \subfigure[Prior marginal density of $X$ is normal with mean=1 and
  variance=5.818. Posterior density is t with df=3,mean=1.5 and
  scale=$\sqrt{5.818}=2.412$]{
  \includegraphics[width=5cm]{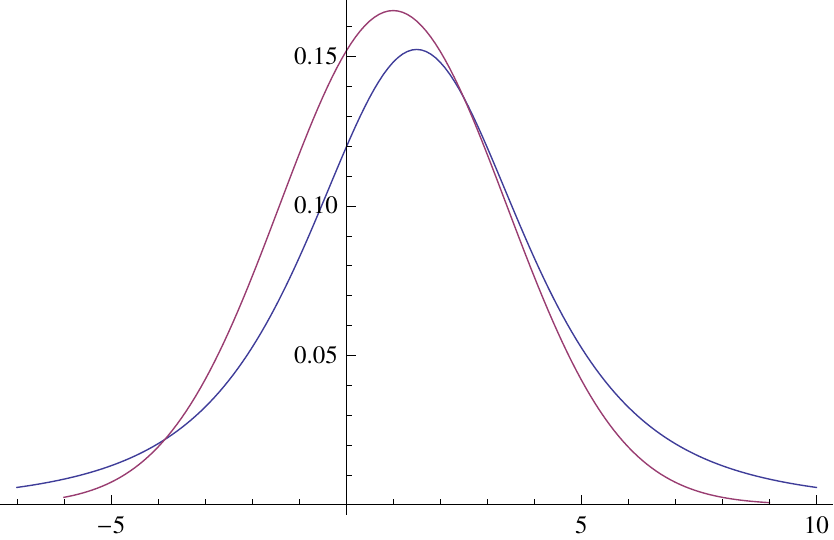}} \hspace{0.3cm}
\subfigure[Prior marginal density of $Z_1$ is normal with mean=1 and
variance=9.1. Posterior density has mean (and mode) 1.5 and heavier
tails .]{
  \includegraphics[width=5cm]{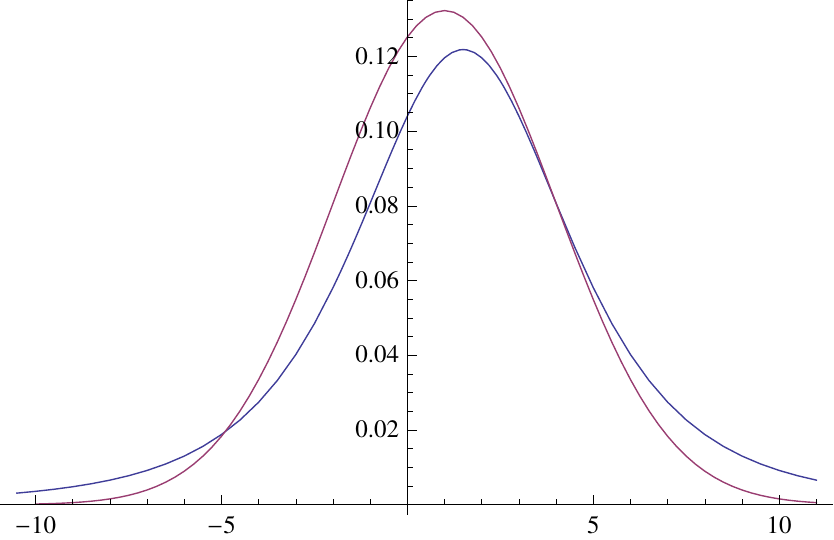}} \hspace{0.3cm}
\subfigure[Prior marginal density of $Z_2$ is normal with mean=1 and
variance=1.1. Posterior density has mean (and mode) 1.5 and heavier
tails.]{
\includegraphics[width=5cm]{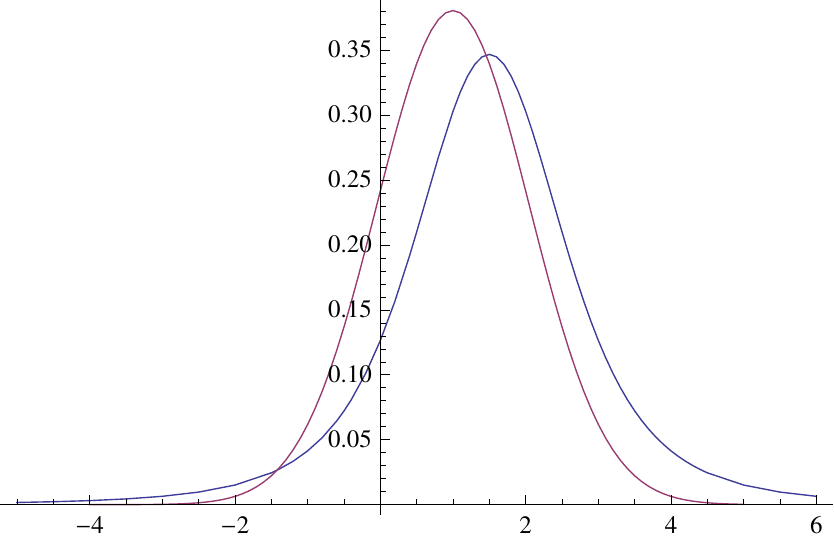}}
\caption{ Prior and posterior marginal densities
  under a constraint on the marginal density of a portfolio.}
\end{center}
\label{fig:marginals}
\end{figure}
In Figure~\ref{fig:marginals} we compare the marginal densities of
$X$, $Z_1$ and $Z_2$ under prior and posterior distributions. We note
that incorporating the constraint that $X$ has a fat-tailed density
renders the asset returns from $A_1$ and $A_2$ to be similarly
fat-tailed.  
}
\end{example}
\begin{example}
\label{6asset:example}
\em{ We consider an equally weighted portfolio in six global indices:
  ASX (the S$\&$P/ASX200, Australian stock index), DAX (the German
  stock index), EEM (the MSCI emerging market index), FTSE (the
  FTSE100, London Stock Exchange), Nikkei (the Nikkei225, Tokyo Stock
  Exchange) and S$\&$P (the Standard and Poor 500).  Let
  $Z_1,Z_2,\ldots,Z_6$ denote the weekly rate of returns from ASX,
  DAX, EEM, FTSE, Nikkei and S$\&$P, respectively.  We take prior
  distribution of $(Z_1,Z_2,\ldots,Z_6)$ to be multivariate Gaussian
  with mean vector 
  \[ [0.062\% \quad 0.28\% \quad 0.045\% \quad 0.13\% \quad 0.24\%
  \quad 0.26\%]\] and variance-covariance matrix
  \[ \left(
    \begin{array}{cccccc} 
      0.4285 & 0.4018 & 0.4394 & 0.3550 & 0.0269 & 0.3194\\
      0.4018 & 0.8139 & 0.6542 & 0.5353 & 0.0558 & 0.5274\\
      0.4394 & 0.6542 & 0.9278 & 0.5248 & 0.0060 & 0.5486\\
      0.3550 & 0.5353 & 0.5248 & 0.4791 & 0.0371 & 0.4220\\
      0.0269 & 0.0558 & 0.0060 & 0.0371 & 0.7606 & 0.0420\\
      0.3194 & 0.5274 & 0.5486 & 0.4220 & 0.0420 & 0.4801
    \end{array}\right) \times 10^{-3}\] 
  estimated from the historical prices of these indices (over the
  period Jan 2010 to Dec 2013). Assuming a notional amount of 1
  million, the historical {value-at-risk} (VaR) and VaR under the
  prior distribution for our portfolio for different confidence levels
  are reported, respectively, in the second and third column of Table
  1.  Next, suppose that we expect the indices ASX, EEM and S$\&$P to
  strengthen and have expected weekly rates of return as $0.1\%, 0.1\%
  \text{ and } 0.35\%$ respectively. Further, consider an independent
  expert view that returns on DAX will exhibit a heavy-tailed
  behaviour. Specifically, let the expert views be:
  \begin{itemize}
  \item[(a)] $\tilde{E}Z_1 = 0.1\%, \tilde{E}Z_3 = 0.1\%, \tilde{E}Z_4
    = 0.13\%, \tilde{E}Z_5 = 0.24\%, \tilde{E}Z_6 = 0.35\%$ and
  \item[(b)] $Z_2$ has a $t$-distribution with 3 degrees of freedom.
  \end{itemize}
  The fourth column in Table 1 reports VaRs at different confidence
  levels under the posterior distribution obtained after incorporating
  views only on expected returns (that xzis, only View (a)).  We see
  that these do not differ much from those under the prior
  distribution. This can be contrasted with the fifth column where we
  have reported the VaRs (computed from 100,000 samples) under the
  posterior distribution obtained after including View (b) that $Z_2$
  has $t$-distribution as well as the View (a) on the expected rates
  of return.
  \begin{table}[h]
    \begin{center}
      \begin{tabular}{|c|p{2cm}|p{2cm}|p{2.5cm}|p{3cm}|p{3cm}|}
        \hline
        VaR at & Historical VaR & Prior distribution & Posterior with
        View (a) & Posterior with Views (a) and (b) & Posterior with Views
        (a) and (c)\\
        \hline
        0.9975 & 67,637 & 56,402 & 56,705 & 79,549 & 67,860\\
        0.9950 & 56,048 & 51,895 & 52,198 & 63,301 & 58,416\\
        0.9925 & 45,524 & 49,099 & 49,402 & 55,853 & 54,067\\
        0.9500 & 29,682 & 33,746 & 34,049 & 29,544 & 33,080\\
        0.7500 & 13,794 & 14,829 & 15,312 & 12,163 & 13,970\\
        0.5000 & \hspace{2pt} 2,836 & \hspace{2pt} 1,680 &
        \hspace{2pt} 2,983 & \hspace{2pt} 1,968 &  \hspace{2pt}
        2,078\\ 
        \hline
      \end{tabular}
      \label{EG-VAR}
      \caption{The second column reports VaR obtained from historical
        returns and the third column reports VaR under the prior
        multivariate normal distribution. While the fourth column is
        for VaR from posterior distribution after including view on
        expected returns (that is, View (a)) the fifth and sixth
        columns corresponds to VaR from posterior distribution after
        incorporating views both on density and expected returns}
    \end{center}
  \end{table}

  Next, suppose that we have the assumption of heavy-tailed density on
  returns of a different asset rather than DAX; for example, consider
  the following view:
  \begin{itemize}
  \item[(c)] A $t$-distribution models returns on S$\&$P
    better. Specifically, let us say that a $t$-distribution with 6
    degrees of freedom is more representative of the tail behaviour of
    observed values of $Z_6.$
  \end{itemize}
  The VaRs corresponding to the posterior distribution obtained after
  incorporating Views (a) and (c) are reported in Column 6 of Table
  1. It can be observed from Columns 5 and 6 of Table 1 that unlike
  posterior distribution which includes views only on expected rates
  of return, marked differences occur from prior VaRs if heavy-tailed
  distribution is assumed for any component asset.

}
\end{example}

\section{Applications to Options Pricing}
\label{SEC-OPT-APP}
In this section, we consider an options pricing scenario where the
implied risk neutral densities of certain highly liquid assets can be
used as views to calibrate models for pricing options written over
assets that may be less actively traded but correlated with the liquid
assets. We show that such views on densities get easily incorporated
through our optimization problem $\bf{O_1}.$

In the option pricing scenario, an effective way to price an option is
through evaluating its expected payoff with respect to the risk
neutral density implied by the option prices observed in the
market. However, as discussed in \cite{breeden1978prices}, estimating
implied risk neutral densities require availability of data over a
large number of strikes. There exists a huge body of literature on
extracting implied risk neutral density from observed option prices of
highly liquid stocks that are traded at many strikes (see
\cite{jackwerth2004option} for a comprehensive review). However, for a
stock which is not actively traded, estimation of implied risk neutral
density is difficult, and in such cases, the implied risk neutral
density available for some heavily traded benchmark asset which is
representative of the market and correlated with the stock of our
interest can be posed as a view/constraint on the marginal
distribution. This view can then be incorporated in the prior
Black-Scholes model to arrive at a posterior model which is more
representative of the observed options prices. To illustrate the
applicability of our framework to the option pricing problem, we
provide an example here.

\begin{example}
{\em Consider the problem of pricing an out of the money
call option on IBM stock trading at USD 82.98 on Jan 5, 2005. This option
with strike at USD 88 is set to expire after 72 days, that is on March 18,
2005. The annual risk free interest rate is $2.69\%.$  We may have
further relevant additional information in the market: For example, if
there is an in the money option at strike price USD 80 on the same IBM
stock for the same maturity which is traded heavily at USD 4.53, this
additional information presents  itself as the following constraint on
risk neutral density: 
\begin{equation} 
  e^{-D} \E [(X-80)^+] = 4.53,
  \label{CONSTRAINT1}
\end{equation}
where $X$ is the value of IBM stock at maturity, $D$ is the
discount factor and $\E[\cdot]$ is the expectation operator with
respect to the risk neutral measure. Further, it is easy to obtain the
daily closing bid and ask prices for some highly liquid instruments
like Standard and Poor's 500 Index options. S\&P500 index is widely
accepted as the proxy for U.S. market portfolio. As mentioned in the
Introduction, its risk neutral density 
implied by the traded options encompasses information that can be
expressed via constraints on expectations (like option prices, mean
rate of return, etc.) and much more (like market sentiments, risk
preferences, sensitivity to new information, etc.); see
\cite{figlewski2009estimating} for a detailed discussion on
this. The S\&P500 option price data and the risk free rate that we
use are from Table 1 in \cite{figlewski2009estimating} (this
facilitates in utilizing the same implied risk neutral density
extracted from the option prices in \cite{figlewski2009estimating}).
Let $Y$ denote the value of S\&P500 index at maturity and
$g_{\textnormal{rnd}}(\cdot)$ denote the risk neutral density of $Y$
implied by the option prices. Then the following constraint gets
imposed on the joint risk neutral density $f_{X,Y}(\cdot,\cdot)$:
\begin{equation}
  \int f_{X,Y}(x,y)\diff x = g_{\textnormal{rnd}}(y), \text{ for all } y.
  \label{CONSTRAINT2}
\end{equation}
For computing the prior joint risk neutral density, we calibrate
multi-asset Black-Scholes model from the historical data of IBM and
S\&P500 prior to Jan 5, 2005 obtained from Yahoo Finance. This results
in a normal prior with covariance matrix
\begin{equation*}
    \begin{bmatrix}
      3.969& -0.4721\\
      -0.4721& 4.489
    \end{bmatrix} \times 10^{-5}
  \label{PRIOR-LN}
\end{equation*}
on log-returns of the IBM stock and S\&P500 index respectively. The
overall problem naturally manifests into finding a posterior density
close to the specified risk neutral lognormal prior $f(\cdot,\cdot)$
while satisfying constraints \eqref{CONSTRAINT1} and
\eqref{CONSTRAINT2}. From Theorem 2, the 
posterior joint risk neutral density
$f_{\textnormal{pos}}(\cdot,\cdot)$ is of the following form:
\[ f_{\textnormal{pos}}(x,y) = e^{\lambda (x - 80)^+}
f(x|y)g_{\textnormal{rnd}}(y),\] where $\lambda$
solving
\[ \int_{x,y} (x-80)^+e^{\lambda (x - 80)^+}
f(x|y) g_{\textnormal{rnd}}(y)\diff x \diff y = 
4.53e^D\] is found to be 0.2479 numerically. This can then be used to
compute the price of the out of the money IBM option with strike at
USD 88 as below:
\[ e^{-D}\int_{x,y} (x-88)^+e^{0.2479 (x - 80)^+}
f(x|y)g_{\textnormal{rnd}}(y)\diff x \diff y = 1.17.\]}
\end{example}
It is easy to incorporate additional option price constraints as in
\eqref{CONSTRAINT1} and find a posterior risk neutral density that is
consistent with all the observed option prices along with the implied
marginal risk neutral density.

\section{Conclusion}
\label{SEC-CONC}
In this article, we built upon the existing methodologies that use
relative entropy based ideas for incorporating mathematically
specified views/constraints to a given financial model to arrive at a
more accurate one. Our key contribution is that we extend the proposed
methodology to allow for constraints on marginal distributions of
functions of underlying variables in addition to moment constraints.
In addition, we specialized our results to the Markowitz portfolio
modeling framework where multivariate Gaussian distribution is used to
model asset returns.  Here, we developed closed-form solutions for the
updated posterior distribution.  In case when there is a constraint
that a marginal of a single portfolio of assets has a fat-tailed
distribution, we showed that under the posterior distribution,
marginal of all assets with non-zero correlation with this portfolio
have similar fat-tailed distribution. This may be a reasonable and a
simple way to incorporate realistic tail behavior in a portfolio of
assets. We also illustrated an application of the proposed framework
in option pricing setting.  Finally, we numerically tested the
proposed methodology on simple examples.

\vspace{0.2in}
{\noindent {\bf Acknowledgement:}
\em The authors would like to thank Paul Glasserman for directional
suggestions  that greatly helped this effort. 
}

\bibliographystyle{abbrv} 
\bibliography{project2011}

\appendix
\section*{Appendix: Proofs}
Here we provide proofs of Theorem \ref{thm:exist_unique}, Proposition
\ref{normal:marg} and Theorem \ref{tail}.\\

\noindent \textbf{Proof of Theorem \ref{thm:exist_unique}:\ \ } Let
$F:\mathbb{R}^k\rightarrow \mathbb{R}$ be a function defined as
\[F(\bmath{\lambda})=\int_{\bold x}\log\left(\int_{\bold y}
  \exp\left(\sum_l\lambda_l h_l(\bold x, \bold y)\right)f(\bold y|
  \bold x)d \bold y\right)g(\bold x)d \bold x-\sum_l\lambda_l c_l.\]
Then,
\begin{eqnarray*}
  \frac{\partial F}{\partial\lambda_i}&=&\int_{\bold
    x}\left(\frac{\int_{\bold y} h_i(\bold x, \bold y)
      \exp\left(\sum_l\lambda_l h_l(\bold x,\bold y)\right)f(\bold
      y|\bold x)d \bold y}{\int_{\bold y}\exp\left(\sum_l\lambda_l
        h_l(\bold x,\bold y)\right)f(\bold y|\bold x)d \bold
      y}\right)g(\bold x)d \bold x - c_i\\  
  &=&\int_{\bold x}\left(\int_{\bold y} h_i(\bold x,\bold y)
    \frac{\exp\left(\sum_l\lambda_l h_l(\bold x,\bold y)\right)f(\bold
      y|\bold x)}{\int_{\bold y}\exp\left(\sum_l\lambda_l h_l(\bold
        x,\bold y)\right)f(\bold y|\bold x)d \bold y}\,d \bold
    y\right)g(\bold x)d \bold x - c_i\\ 
  &=&\int_{\bold x}\left(\int_{\bold y} h_i(\bold x, \bold
    y)f_{\bmath{\lambda}}(\bold y|\bold x)d \bold y\right)g(\bold x)d
  \bold x - c_i\\ 
  &=&\int_{\bold x}\int_{\bold y} h_i(\bold x,\bold
  y)f_{\bmath{\lambda}}(\bold x, \bold y)d \bold xd \bold y - c_i\\   
  &=&\E_{\bmath{\lambda}}[h_i(\bold X,\bold Y)]-c_i .
\end{eqnarray*}
Hence the set of equations given by $\E_{\bmath{\lambda}}[h_i(\bold
X,\bold Y)] = c_i, i=1,\ldots,k$ is equivalent to:
\begin{equation}
\label{eqn:301}
\left(\frac{\partial F}{\partial\lambda_1},\frac{\partial
    F}{\partial\lambda_2},\ldots,\frac{\partial F}{\partial\lambda_k}
\right)=0\,. 
\end{equation}
The solution to this set of equations exist when the prior model is
such that \[\nabla \log \E\left[
  \exp\left(\sum_{l=1}^k\lambda_lh_l(\bf{X,Y})\right)\right] =
(c_1,\ldots,c_k)^T,\] for some $\lambda =
(\lambda_1,\ldots,\lambda_k)$ in $\mathbb{R}^k.$ Since
\[\frac{\partial}{\partial\lambda_j}f_{\bmath{\lambda}}(\bold y|\bold 
x)=h_j(\bold x,\bold y)f_{\bmath{\lambda}}(\bold y|\bold
x)-\left(\int_{\bold y} h_j(\bold x,\bold y)f_{\bmath{\lambda}}(\bold
  y|\bold x)d \bold y\right) \times f_{\bmath{\lambda}}(\bold y|\bold
x),\] we have
\begin{eqnarray*}
  \frac{\partial^2
    F}{\partial\lambda_j\partial\lambda_i}&=&\int_{\bold
    x}\left(\int_{\bold y} h_i(\bold x,\bold
    y)\frac{\partial}{\partial\lambda_j}f_{\bmath{\lambda}}(\bold
    y|\bold x)\,d \bold y\right)g(\bold x)\,d \bold x\\ 
  &=&\int_{\bold x}\left(\int_{\bold y} h_i(\bold x,\bold y)h_j(\bold
    x,\bold y)f_{\bmath{\lambda}}(\bold y|\bold x)d \bold
    y\right)g(\bold x)d \bold x\\ 
  &\,& -\int_{\bold x}\left(\int_{\bold y} h_j(\bold x,\bold
    y)f_{\bmath{\lambda}}(\bold y|\bold x)d\bold
    y\right)\left(\int_{\bold y} h_i(\bold x,\bold
    y)f_{\bmath{\lambda}}(\bold y|\bold x)d \bold y\right)g(\bold x)d
  \bold x\\ 
  &=&\E_{g(\bold x)}\left[\E_{\bmath{\lambda}}[h_i(\bold X,\bold
    Y)h_j(\bold X,\bold Y)\mid \bold X]\right]- 
  \E_{g(\bold x)}\left[\E_{\bmath{\lambda}}[h_j(\bold X,\bold Y)\mid
    \bold X]\times \E_{\bmath{\lambda}}[h_i(\bold X,\bold Y)\mid \bold
    X]\right]\\ 
  &=&\E_{g(\bold x)}\left[\text{Cov}_{\bmath{\lambda}}[h_i(\bold
    X,\bold Y),h_j(\bold X,\bold Y)\mid \bold X]\right],
\end{eqnarray*}
where $E_{g(\bold x)}$ denote expectation with respect to the density
function $g(\bold x)$. By our assumption, it follows that the Hessian
of $F(\cdot)$ is positive definite. Thus, the function $F(\cdot)$ is
strictly convex in $\mathbb{R}^k$.  Therefore if there exists a
solution to (\ref{eqn:301}), then it is unique. Since (\ref{eqn:301})
is equivalent to our constraints that $\E_{\bmath{\lambda}}[h_i(\bold
X,\bold Y)] = c_i,$ the theorem follows.\hfill $\Box$
\vspace{10pt}

\noindent\textbf{Proof of Proposition~\ref{normal:marg}:\ \ } By Theorem
\ref{THM-OPT-SOL}, $\tilde{f}(\bold x,\bold y)=g(\bold
x)\times\tilde{f}(\bold y|\bold x),$ where 
\[\tilde{f}(\bold y|\bold
x)=\frac{e^{\bmath{\lambda}^T\bold y}f(\bold y|\bold x)}{\int
  e^{\bmath{\lambda}^T\bold y}f(\bold y|\bold x)d\bold y}.\] Here the
superscript $T$ corresponds to the transpose.  Now\ $f(\bold y|\bold
x)$ is the $k$-variate normal density with mean vector
$\bmath{\mu}_{\bold y|\bold x}=\bmath{\mu}_{\bold
  y}+\bold\Sigma_{\bold {yx}}\bold\Sigma_{\bold {xx}}^{-1}(\bold
x-\bmath{\mu}_{\bold x})$ and the variance-covariance matrix
$\bold\Sigma_{\bold y|\bold
  x}=\bold\Sigma_{\bold{yy}}-\bold\Sigma_{\bold
  {yx}}\bold\Sigma_{\bold {xx}}^{-1}\bold\Sigma_{\bold {xy}}.$ Hence\
$\tilde{f}(\bold y|\bold x)$\ is the normal density with mean\
$(\bmath{\mu}_{\bold y|x}+\bold\Sigma_{\bold y|\bold
  x}\bmath{\lambda})$\ and variance-covariance matrix\
$\bold\Sigma_{\bold y|\bold x}$. Now the moment constraint equation
(\ref{mom:y}) implies:

\begin{eqnarray*}
  \bold a&=&\int_{\bold x\in\mathbb{R}^{N-k}}\int_{\bold
    y\in\mathbb{R}^{k}}\bold y \tilde{f}(\bold x,\bold y) d\bold y
  d\bold x\\ 
  &=&\int_{\bold x\in\mathbb{R}^{N-k}}g(\bold x)\left(\int_{\bold
      y\in\mathbb{R}^{k}}\bold y \tilde{f}(\bold y | \bold x) d\bold
    y\right) d\bold x\\ 
  &=&\int_{\bold x\in\mathbb{R}^{N-k}}g(\bold
  x)\left(\bmath{\mu}_{\bold y|\bold x}+\bold\Sigma_{\bold y|\bold
      x}\bmath{\lambda}\right)d\bold x\\ 
  &=&\int_{\bold x\in\mathbb{R}^{N-k}}g(\bold
  x)\left(\bmath{\mu}_{\bold y}+\bold\Sigma_{\bold
      {yx}}\bold\Sigma_{\bold {xx}}^{-1}(\bold x-\bmath{\mu}_{\bold
      x}) 
    +\bold\Sigma_{\bold y|\bold x}\bmath{\lambda}\right)d\bold x\\
  &=&\bmath{\mu}_{\bold y}+\bold\Sigma_{\bold {yx}}\bold\Sigma_{\bold
    {xx}}^{-1}(\E_g[\bold X]-\bmath{\mu}_{\bold x}) 
  +\bold\Sigma_{\bold y|\bold x}\bmath{\lambda}.
\end{eqnarray*}
Therefore, to satisfy the moment constraint, we must take
\[\bmath{\lambda}=\bold\Sigma_{\bold y|\bold x}^{-1}\left[\bold
  a-\bmath{\mu}_{\bold y}-\bold\Sigma_{\bold {yx}}\bold\Sigma_{\bold
    {xx}}^{-1}(\E_g[\bold X]-\bmath{\mu}_{\bold x})\right]\,.\]
Putting the above value of\ $\bmath{\lambda}$\ in\
$(\bmath{\mu}_{\bold y|\bold x}+\bold\Sigma_{\bold y|\bold
  x}\bmath{\lambda})$\ we see that $\tilde{f}(\bold y|\bold x)$\ is
the normal density with mean $\bold a+\bold\Sigma_{\bold
  {yx}}\bold\Sigma_{\bold {xx}}^{-1}(\bold x-E_g[\bold X])$ and
variance-covariance matrix\ $\bold\Sigma_{\bold y|\bold
  x}.$\hfill$\Box$
\vspace{10pt}

\noindent \textbf{Proof of Theorem~\ref{tail}:\ \ } We have
\[\tilde{f}(\bold y|x)=D \exp \left(-\frac{1}{2}(\bold
  y-\tilde{\bmath{\mu}}_{\bold y|x})^t\bold\Sigma_{\bold
    y|x}^{-1}(\bold y-\tilde{\bmath{\mu}}_{\bold y|x})\right)\] for an
appropriate constant $D$, where 
\[\tilde{\bmath{\mu}}_{\bold y|x} = \bold
a+\left(\frac{x-E_g(X)}{\sigma_{xx}}\right)\bmath{\sigma}_{x\bold
  y}.\] Suppose that the stated assumptions hold for $i=1$.  Under the
optimal distribution, the marginal density of $Y_1$ is
\[\tilde{f}_{Y_1}(y_1)=\int_{(x,y_2,...,y_k)}D
\exp\left(-\frac{1}{2}(\bold y-\tilde{\bmath{\mu}}_{\bold
    y|x})^t\bold\Sigma_{\bold y|x}^{-1}(\bold
  y-\tilde{\bmath{\mu}}_{\bold y|x})\right)g(x)dxdy_2...dy_k.\] Now
the limit in (\ref{limit:marg}) is equal to:
\[\lim_{y_1\to\infty}\int_{(x,y_2,y_3,...,y_k)}D
\exp\left(-\frac{1}{2}(\bold y-\tilde{\bmath{\mu}}_{\bold
    y|x})^t\bold\Sigma_{\bold y|x}^{-1}(\bold
  y-\tilde{\bmath{\mu}}_{\bold
    y|x})\right)\times\frac{g(x)}{g(y_1)}dxdy_2...dy_k.\]
The term in the exponent is:
\[-\frac{1}{2}\sum_{i=1}^k\left(\bold\Sigma_{\bold
    y|x}^{-1}\right)_{ii}\left((y_i-a'_i)-x\frac{\sigma_{xy_i}}{\sigma_{xx}}\right)
^2 -\frac{1}{2}\sum_{i\neq j}\left(\bold\Sigma_{\bold
    y|x}^{-1}\right)_{ij}\left((y_i-a'_i)-x\frac{\sigma_{xy_{_i}}}{\sigma_{xx}}\right)
\left((y_j-a'_j)-x\frac{\sigma_{xy_{_j}}}{\sigma_{xx}}\right)\] where 
$a'_i=a_i-{E_g(X)}\sigma_{xy_i}/{\sigma_{xx}}.$ Now we make the
following substitutions:
\[(x,y_2,y_3,...,y_k)\longmapsto \bold y'=(y_1',y_2',y_3',...,y_k'),\]
\[y_1'=(y_1-a'_1)-x\frac{\sigma_{xy_{_1}}}{\sigma_{xx}},\text{ and }
y_i'=(y_i-a'_i)-x\frac{\sigma_{xy_{_i}}}{\sigma_{xx}},i=2,3,...,k.\]
Assuming that\ $\sigma_{xy_{_1}}=\textnormal{Cov}(X,Y_1)\neq 0$, the
inverse map $\bold y'=(y_1',y_2',y_3',...,y_k')\longmapsto
(x,y_2,y_3,...,y_k)$ is given by:
\[x=\frac{\sigma_{xx}}{\sigma_{xy_1}}(y_1-y'_1-a'_1), \quad\quad
y_i=y'_i+a'_i+\frac{\sigma_{xy_i}}{\sigma_{xy_1}}(y_1-y'_1-a'_1)
\text{ for }i=2,3,...,k,\]
\[\text{with Jacobian}\ \
\left|\det\left(\frac{\partial(x,y_2,y_3,...,y_k)}{\partial(y_1',y_2',y_3',...,y_k')}
  \right)\right|=\frac{\sigma_{xx}}{\sigma_{xy_{_1}}}.\] The integrand
then becomes
\[D \exp\left(\frac{1}{2}{\bold y'}^T\bold\Sigma_{\bold y|x}^{-1}\bold
  y'\right)\frac{g\left(\frac{\sigma_{xx}}{\sigma_{xy_{_1}}}(y_1-y'_1-a'_1)\right)}{g(y_1)}
\frac{\sigma_{xx}}{\sigma_{xy_{_1}}}.\]
By assumption,
\[\frac{g\left(\frac{\sigma_{xx}}{\sigma_{xy_1}}(y_1-y'_1-a'_1)\right)}{g(y_1)}\leq
h(y_1)\,\,\text{for all}\,y_1,\] for some non-negative function
$h(\cdot)$ such that $Eh(Z) <\infty$ when $Z$ has a Gaussian
distribution. Therefore, by dominated convergence theorem, we have
that
\begin{align*}
  \lim_{y_1\to\infty} &\int {D \exp\left(-\frac{1}{2}{\bold
        y'}^T\bold\Sigma_{\bold y|x}^{-1}\bold
      y'\right)\frac{g\left(\frac{\sigma_{xx}}{\sigma_{xy_{_1}}}(y_1-y'_1-a'_1)\right)}{g(y_1)}
    \frac{\sigma_{xx}}{\sigma_{xy_{_1}}}d\bold y'}\\
  =&\int D \exp\left(-\frac{1}{2}{\bold y'}^T\bold\Sigma_{\bold
      y|x}^{-1}\bold
    y'\right)\lim_{y_1\to\infty}\frac{g\left(\frac{\sigma_{xx}}{\sigma_{xy_{_1}}}
      (y_1-y'_1-a'_1)\right)}{g(y_1)}\frac{\sigma_{xx}}{\sigma_{xy_{_1}}}d\bold 
  y'\\
  =&\int D \exp\left(-\frac{1}{2}{\bold y'}^T\bold\Sigma_{\bold
      y|x}^{-1}\bold
    y'\right)\lim_{y_1\to\infty}\frac{g\left(\frac{\sigma_{xx}}
      {\sigma_{xy_{_1}}}(y_1-y'_1-a'_1)\right)}
  {g(y_1-y'_1-a'_1)}\lim_{y_1\to\infty}\frac{g(y_1-y'_1-a'_1)}
  {g(y_1)}\frac{\sigma_{xx}}{\sigma_{xy_{_1}}} d\bold y',
\end{align*}
which, by our assumption on $g(\cdot)$, in turn equals
\[\int D \exp\left(-\frac{1}{2}{\bold y'}^T\bold\Sigma_{\bold
    y|x}^{-1}\bold
  y'\right)\times\left(\frac{\sigma_{xy_1}}{\sigma_{xx}}\right)^\alpha\times
1\times \frac{\sigma_{xx}}{\sigma_{xy_1}}d\bold
y'=\left(\frac{\sigma_{xy_{_1}}}{\sigma_{xx}}\right)^{\alpha-1}\,\,.\,\,\quad 
\quad\quad\quad\Box\]



\end{document}